\newtheorem{lemma}{Lemma}
\newcommand{\hide}[1]{}
\newcommand{\bit}{\begin{itemize}}
\newcommand{\eit}{\end{itemize}}
\newcommand{\ben}{\begin{enumerate}}
\newcommand{\een}{\end{enumerate}}
\newcommand{\simgraph}{\textsc{NetSimile}\xspace}
\newcommand{\fsm}{\textsc{FSM}\xspace}
\newcommand{\eig}{\textsc{EIG}\xspace}
\let\oldmarginpar\marginpar
\renewcommand\marginpar[1]{\-\oldmarginpar[\raggedleft\footnotesize
  \textcolor{red}{#1}]%
{\raggedright\footnotesize \textcolor{red}{#1}}} 
\begin{document}

\title{NetSimile: A Scalable Approach to Size-Independent Network Similarity}

\author{
{Michele Berlingerio{\small $^\ast$} \qquad\  Danai Koutra{\small
    $^\dag$} \qquad\ 
  Tina Eliassi-Rad{\small $^\ddag$} \qquad\  Christos Falousos{\small $^\dag$} }%
\vspace{1.6mm}\\
\begin{tabular}{ccc}
\begin{minipage}{150pt}
\center
\fontsize{10}{10}\selectfont\itshape
$~^\ast$IBM Research\\
Dublin, Ireland\\
\fontsize{9}{9}\selectfont\ttfamily\upshape
mberling@ie.ibm.com%
\end{minipage}
&
\begin{minipage}{150pt}
\center
\fontsize{10}{10}\selectfont\rmfamily\itshape
$~^\dag$Carnegie Mellon University\\
Pittsburgh, PA, USA\\
\fontsize{9}{9}\selectfont\ttfamily\upshape
\{danai,christos\}@cs.cmu.edu
\end{minipage}
&
\begin{minipage}{150pt}
\center
\fontsize{10}{10}\selectfont\rmfamily\itshape
$~^\ddag$Rutgers University\\
Piscataway, NJ, USA\\
\fontsize{9}{9}\selectfont\ttfamily\upshape
eliassi@cs.rutgers.edu
\end{minipage}
\end{tabular}
}

\maketitle

\begin{abstract}
Given a set of \emph{k} networks, possibly with different sizes and no overlaps in nodes or edges, how can we quickly assess similarity between them, without solving the node-correspondence problem?  Analogously, how can we extract a small number of descriptive, numerical features from each graph that effectively serve as the graph's ``signature''?  Having such features will enable a wealth of graph mining tasks, including clustering, outlier detection, visualization, etc.

We propose \simgraph\ -- a novel, effective, and scalable method for solving the aforementioned problem.  \simgraph has the following desirable properties: (a) It gives similarity scores that are size-invariant. (b) It is scalable, being linear on the number of edges for ``signature'' vector extraction.  (c) It does {\em not} need to solve the node-correspondence problem. We present extensive experiments on numerous synthetic and real graphs from disparate domains, and show \simgraph's superiority over baseline competitors.  We also show how \simgraph enables several mining tasks such as clustering, visualization, discontinuity detection, network transfer learning, and re-identification across networks.
\end{abstract}

\section{Introduction}
\label{sec:intro}

We address the problem of \emph{network similarity}.  Specifically, given a set of networks of (possibly) different sizes, and without knowing the node-correspondences, how can we efficiently provide a meaningful measure of structural similarity (or distance)? For example, how structurally similar are the SDM and SIGKDD co-authorship graphs? How does their structural similarity compare with the similarity between the SDM and ICDM co-authorship graphs?  Such measures are extremely useful for numerous graph-mining tasks. One such task is clustering: given a set of graphs, find groups of similar ones; conversely, find anomalies or discontinuities -- i.e., graphs that stand out from the rest. Transfer learning is another application, if graphs $G_1$ and $G_2$ are similar, we can transfer conclusions from one to the other to perform across-network classification with better classification accuracy. Yet another application is re-identification across two graphs -- where if the two graphs are similar, we can use information in one to re-identify nodes in the other. 

We define the \textbf{network similarity / distance} problem as follows.  \textbf{Input:} A set of $k$ anonymized networks of potentially different sizes, which may have no overlapping nodes or edges. \textbf{Output:} The structural similarity (or distance) scores of any pair of the given networks (or better yet, a feature vector for each network).\footnote{Throughout the paper, we assume similarity and distance are interchangeable.}

\begin{figure*}[th!]
\begin{tabular}{ccc}
\hspace{-6mm}\includegraphics[width=0.35\linewidth]{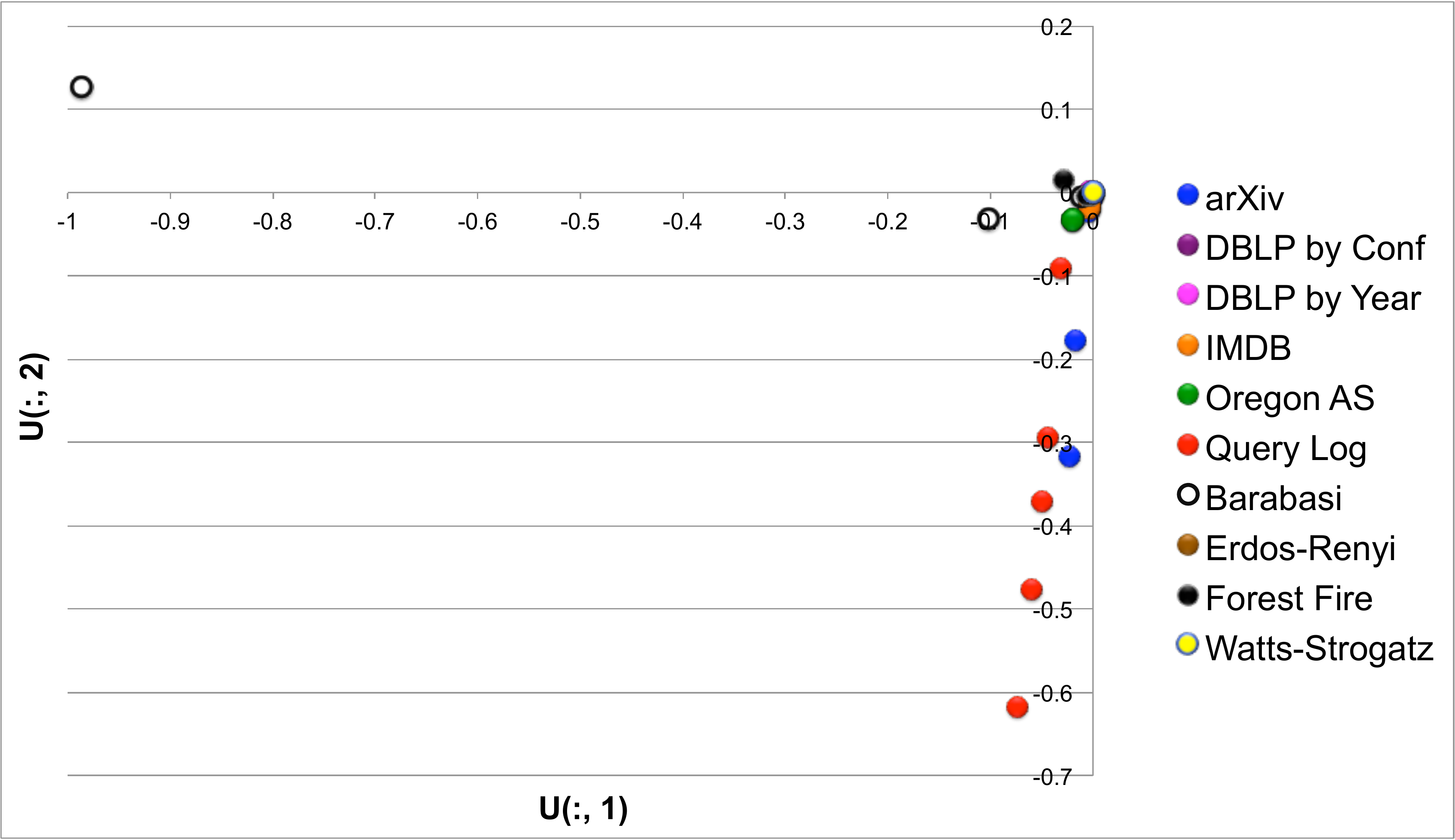} &
\hspace{-3.5mm}\includegraphics[width=0.35\linewidth]{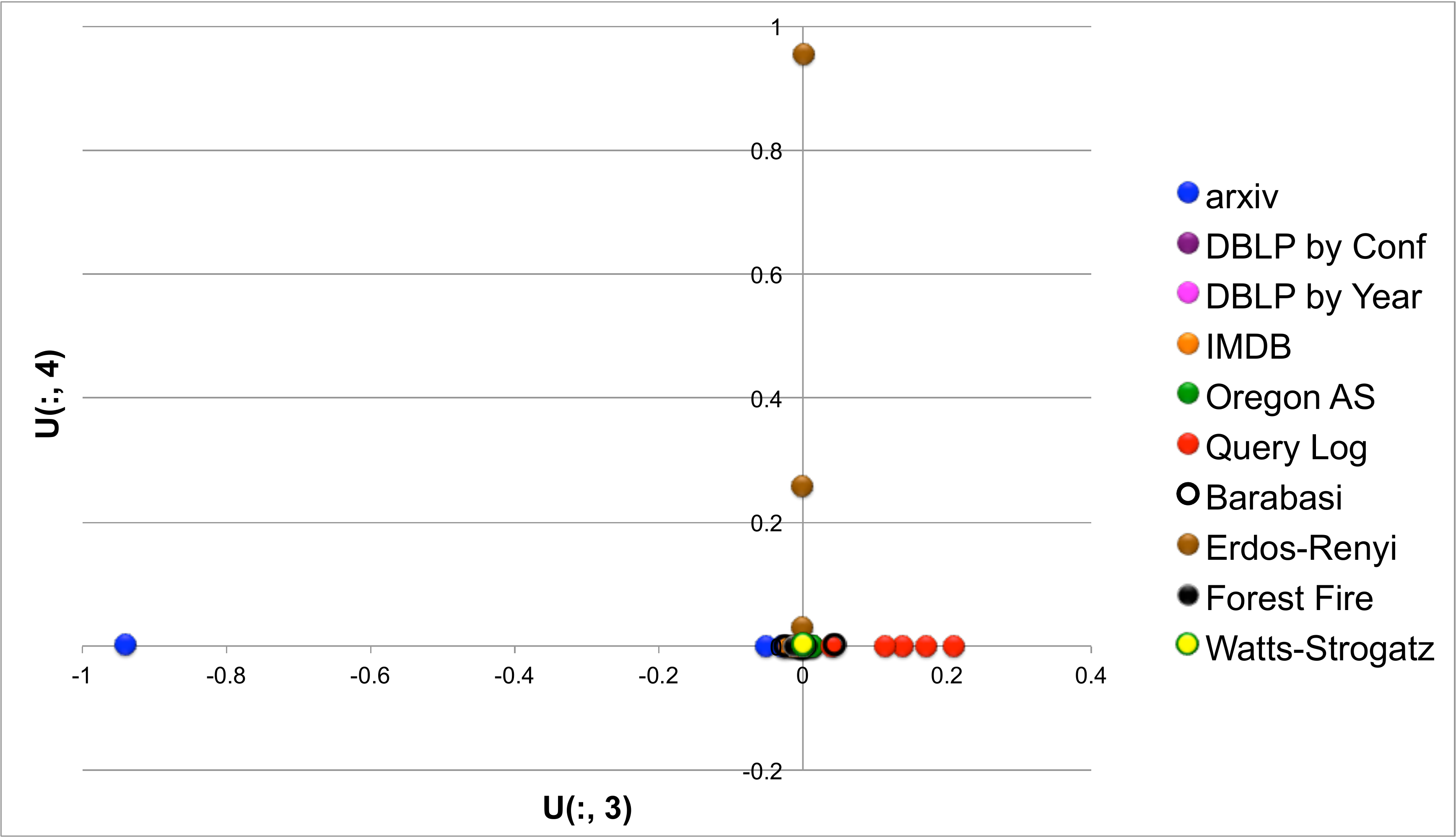} & 
\hspace{-3.5mm}\includegraphics[width=0.35\linewidth]{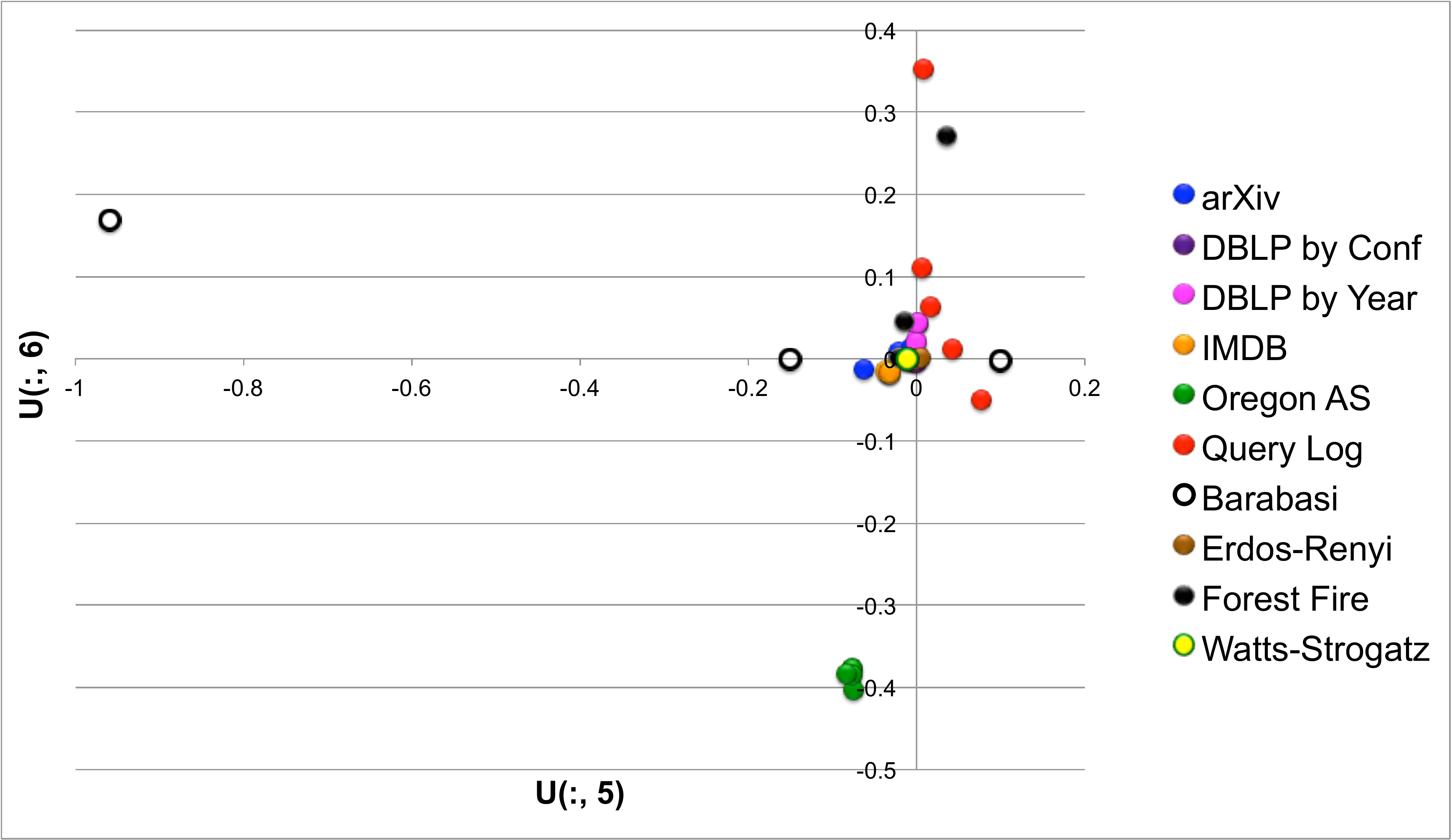}
\\
(a) 1st vs.~2nd principal components  & (b) 3rd vs.~4th principal components & (c) 5th vs.~6th principal components \\
of \simgraph features  & of \simgraph features  & of \simgraph features \\
\end{tabular}
\caption{An illustration of the visualization power of \simgraph's features. After doing SVD on the aggregated-feature matrix $W$ [$k$ graphs $\times$ $fr$ aggregated-features] (where $f$ = $\#$ features and $r$ = $\#$ aggregators), we get $k$ points in low-dimensional space, where similar graphs naturally cluster. (a) The query-log graphs (in red) fall on a line. (b) The Erd\"os-R\'enyi graphs (brown) fall on the vertical line. (c) The ``autonomous systems'' graphs (in green) group together.}\label{fig:svd}
\end{figure*}

The core of our approach, \simgraph, is a careful extraction and evaluation of structural features. For every graph $G$, we derive a small number of numerical features, which capture the topology of the graph as moments of distributions over its structural features.  The similarity score between two graphs then is just the similarity of their ``signature'' feature vectors. Once we have the similarity function, we can do a wealth of data mining tasks, including clustering, visualization, and anomaly detection.

Our empirical study includes experiments on more than 30 real-world networks and various synthetic networks generated by four different graph generators (namely, Erd\"os-R\'enyi, Forest Fire, Watts-Strogatz, and Barab\'asi Preferential Attachment).  We compare \simgraph with two baselines. The first baseline extracts frequent subgraphs from the given graphs and performs pairwise comparison on the intersection of the two sets of frequent patterns. The second baseline computes the \emph{k} largest eigenvalues of each network's adjacency matrix and measures the distance between them.  

Our experiments provide answers to the following questions:  How do the various methods compare w.r.t.~their similarity scores? Are their results intuitive (e.g., is a social network more similar to another social network than to a technological network)? How do they compare to null models?  Are the methods just measuring the sizes of the networks in their comparisons? How scalable are the various methods? Can we build a useful taxonomy for networks based on their similarities?

\textbf{Proof of Concept.} When measuring similarity, having a representative set of features for each graph is very powerful. The simplest way to illustrate this power is through visualization. Given a set of $k$ graphs (which we describe in Section~\ref{subsec:data}; and which include co-authorship networks, autonomous systems networks, Erd\"{o}s-R\'enyi graphs, etc), we can use \simgraph to extract a $graph \times feature$ matrix, and then project this matrix into its principal component space through Singular Value Decomposition (SVD).

Figure~\ref{fig:svd} depicts the scatterplots of the \simgraph $graph \times feature$ matrix' 1st vs.~2nd principal components, 3rd vs.~4th principal components, and 5th vs.~6th principal components.   As the plot clearly shows, \simgraph's graph ``signature'' vectors are discriminative enough to separate out the various types of graphs and to find anomalous graphs among the set of input graphs. For example, in Figure~\ref{fig:svd}(a), the synthetic ``Barab\'asi-100K-nodes'' stands out, while a group of ``query-log'' graphs align (red circles). In Figure~\ref{fig:svd}(b), the Erd\"{o}s-R\'enyi graphs (in brown) occupy the vertical axis, while in Figure~\ref{fig:svd}(c) the green points (``Oregon AS'' graphs) cluster together.

\paragraph{Contributions}

\bit
\item \emph{Novelty}:  By using moments of distribution as aggregators, \simgraph generates a single ``signature'' vector for each graph based on the local and neighborhood features of its nodes. 
  
\item \emph{Effectiveness}: \simgraph produces similarity / distance measures that are size-independent, intuitive, and interpretable. 
  
\item  \emph{Scalability}: The runtime complexity for generating \simgraph's ``signature'' vectors is linear on the number of edges. 

\item \emph{Applicability:} \simgraph's  ``signature'' vectors are useful in many graph mining tasks. 
\eit

The rest of the paper is organized into the following sections: Proposed Method, Experiments, Related Work, and Conclusions.

\section{Proposed Method}
\label{sec:meth}
Algorithm~\ref{wrapper} outlines \simgraph, which has three  steps: \emph{feature extraction} (Algorithm \ref{getfeatures}), \emph{feature aggregation} (Algorithm \ref{aggregator}), and \emph{comparison} (Algorithm \ref{compare}).

\textbf{Feature extraction.} \simgraph's feature extractor (see Algorithm~\ref{getfeatures}) generates a set of structural features for each node based on its local and egonet-based features -- a node's egonet is the induced subgraph of its neighboring nodes.   Specifically, \simgraph computes the following 7 features. 

\bit
\item $d_i=|N(i)|$: number of neighbors (i.e. degree) of node $i$; $N(i)$ denotes the neighbors of node $i$. 

\item $c_i$: clustering coefficient of node $i$, defined as the number of triangles connected to node $i$ over the number of connected triples centered on node $i$. 

\item $\bar{d}_{N(i)}$: average number of node $i$'s two-hop away neighbors, computed as $\frac{1}{d_i}\sum_{\forall j \in N(i)}^{}d_j$. 

\item $\bar{c}_{N(i)}$: average clustering coefficient of N($i$) \hide{node $i$'s neighbors}, calculated as $\frac{1}{d_i}\sum_{\forall j \in N(i)}^{}c_j$. 

\item $|E_{ego(i)}|$: number of edges in node $i$'s egonet; $ego(i)$ returns node $i$'s egonet.

\item $|E^o_{ego(i)}|$: number of outgoing edges from $ego(i)$. \hide{node $i$'s egonet.}

\item $|N(ego(i))|$: number of neighbors of $ego(i)$. \hide{node $i$'s egonet.}
\eit

Note that \simgraph is flexible enough to incorporate additional features. We choose these seven local and egonet-based features because they satisfy our constraints in terms of effectiveness (namely, size-independence, intuitiveness, and interpretability) and scalability (see Section~\ref{sec:exp}).  Also empirically, we observed the aforementioned features to be sufficient for measuring similarity across graphs from various domains (details in Section~\ref{sec:exp}).

\begin{algorithm}[t]
\caption{\simgraph}
\label{wrapper}
\small
\begin{algorithmic}[1]
\REQUIRE{\{$G_1, G_2, \cdots, G_k$\}, $doClustering$}
\STATE // extract features from nodes
\STATE \{$F_{G_1}, F_{G_2}, \cdots, F_{G_k}$\}:={\sc GetFeatures}(\{$G_1, G_2, \cdots, G_k$\})
\STATE // generate ``signature'' vectors for each graph
\STATE \{$\vec{s}_{G_1}, \vec{s}_{G_2}, \cdots, \vec{s}_{G_k}$\}:={\sc
  Aggregator}(\{$F_{G_1}, F_{G_2}, \cdots, F_{G_k}$\}) 
\STATE // do comparison and return similarity/distance values or clusterings for the given graphs
\RETURN {\sc Compare}(\{$\vec{s}_{G_1}, \vec{s}_{G_2}, \cdots, \vec{s}_{G_k}$\}, $doClustering$)
\end{algorithmic}
\end{algorithm}

\begin{algorithm}[t]
\caption{\simgraph's {\sc GetFeatures}}
\label{getfeatures}
\small
\begin{algorithmic}[1]
\REQUIRE{\{$G_1, G_2, \cdots, G_k$\}}
\FORALL{$j \in \{G_1, G_2, \cdots, G_k\}$}
\STATE $F_{G_j}$=$[]$  // initialize feature matrix for $G_j$
\STATE // extract features for all nodes in $G_j$
\FORALL{$i\in V_j$}  
\STATE $F_{G_j}$=$F_{G_j}$\ $\cup$ 
\STATE ~~~~~ $\{\{d_i, c_i, \bar{d}_{N(i)}, \bar{c}_{N(i)}, |E_{ego(i)}|, |E^o_{ego(i)}|, |N(ego(i))|\}\}$
\ENDFOR 
\ENDFOR 
\STATE // return a set of $node \times feature$ matrices
\RETURN $\{F_{G_1}, F_{G_2}, \cdots, F_{G_k}\}$
\end{algorithmic}
\end{algorithm}

\begin{algorithm}[t]
\caption{\simgraph's {\sc Aggregator}}
\label{aggregator}
\small
\begin{algorithmic}[1]
\REQUIRE{\{$F_{G_1}, F_{G_2}, \cdots, F_{G_k}$\}}
\FORALL{$j \in \{F_{G_1}, F_{G_2}, \cdots, F_{G_k}\}$}
\STATE $\vec{s}_{G_j}$=$[]$  // initialize ``signature'' vector for $G_j$
\STATE // for each feature column in $F_{G_j}$, compute a 
set of aggregates 
\FORALL{$feat \in F_{G_j}$} 
\STATE $\vec{s}_{G_j}$=$\vec{s}_{G_j}$ $\cup$ 
\STATE ~~~~~~~~ $\{median(feat), mean(feat), stdev(feat),$
\STATE ~~~~~~~~~ $skewness(feat), kurtosis(feat)\}$
\ENDFOR 
\ENDFOR
\STATE // return a set of ``signature'' vectors for the graphs
\RETURN \{$\vec{s}_{G_1}, \vec{s}_{G_2}, \cdots, \vec{s}_{G_k}$\}
\end{algorithmic}
\end{algorithm}

\begin{algorithm}[t]
\caption{\simgraph's {\sc Compare}}
\label{compare}
\small
\begin{algorithmic}[1]
\REQUIRE{\{$\vec{s}_{G_1}, \vec{s}_{G_2}, \cdots, \vec{s}_{G_k}$\}}, $doClustering$
\IF {$doClustering$}
\STATE // return clusterings of the given graphs
\RETURN {\sc Cluster}(\{$\vec{s}_{G_1}, \vec{s}_{G_2}, \cdots, \vec{s}_{G_k}$\})
\ELSE
\STATE // return similarity/distance values of the graphs
\RETURN {\sc PairwiseCompare}(\{$\vec{s}_{G_1}, \vec{s}_{G_2}, \cdots, \vec{s}_{G_k}$\})
\ENDIF
\end{algorithmic}
\end{algorithm}

\textbf{Feature aggregation.} After the feature extraction step, \simgraph has extracted a $node \times feature$ matrix, $F_{G_j}$, for each graph $G_j \in \{G_1, G_2, \cdots, G_k\}$.  We can measure similarity between graphs by comparing their feature matrices (see discussion below).  However, we discovered that generating a single ``signature'' vector for each graph produces more efficient and effective comparisons.  To this end, \simgraph uses the following five aggregators on each feature (i.e., on each column of $F_{G_j}$): \emph{median}, \emph{mean}, \emph{standard deviation}, \emph{skewness}, and \emph{kurtosis}.  Note that the latter four of the five aggregators are moments of distribution of each feature.  \simgraph is flexible enough to use other aggregators as well, though we found these five to be sufficient for the task of network comparison and satisfy our effectiveness and scalability constraints (see Section~\ref{sec:exp}).

\textbf{Comparison.} After the feature aggregation step, \simgraph\ has produced a ``signature'' vector $\vec{s}_{G_j}$ for every graph $G_j \in \{G_1, G_2, \cdots, G_k\}$.  \simgraph\ now has the whole arsenal of clustering techniques and pairwise similarity / distance functions at its disposal.  Amongst the collection of pairwise similarity / distance functions, we found Canberra Distance ($d_{Can}(P,Q)= \sum_{i=1}^d \frac{|P_i-Q_i|}{P_i+Q_i}$) to be very discriminative (a good property for a distance measure). This is because Canberra Distance is sensitive to small changes near zero; and it normalizes the absolute difference of the individual comparisons (see discussion in Section~\ref{sec:exp}). 

\textbf{Computational complexity.}  Let $k$ = number of graphs given to \simgraph (i.e., $k =|\{G_1,\cdots, G_k\}|$), $n_j$ = the number of nodes in $G_j$,  $m_j$ = the number of edges in $G_j$,  $f$ = number of structural features extracted, and  $r$ = number of aggregators used.\footnote{Note that $f$, $r$, and $k$ are in the 10s.}

\begin{lemma} The runtime complexity for generating \simgraph's ``signature'' vectors is linear on the number of edges in $\{G_1, \cdots, G_k\}$, and specifically
\begin{equation}	
    O(\sum_{j=1}^k (f n_{j} + f n_j log(n_j) ))
\end{equation}

where $f \ll n_j \ll m_j$ and $n_j log(n_j) \approx m_j$ in real-world graphs.\end{lemma}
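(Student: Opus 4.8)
The plan is to bound the cost of the two stages that together produce the signature vectors --- \textsc{GetFeatures} (Algorithm~\ref{getfeatures}) and \textsc{Aggregator} (Algorithm~\ref{aggregator}) --- separately on a single graph $G_j$, and then sum over $j = 1, \dots, k$. The \textsc{Compare} stage (Algorithm~\ref{compare}) is not part of signature generation, so it is excluded from the accounting.

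First I would analyze \textsc{GetFeatures} on $G_j$. The outer loop visits each of the $n_j$ nodes exactly once. For a node $i$, the local quantities $d_i$ and $c_i$ and the neighbor-averages $\bar{d}_{N(i)}$ and $\bar{c}_{N(i)}$ are obtained by a single scan of $i$'s adjacency list (assuming the $c_i$'s are precomputed during the same pass), at cost $O(d_i)$; the three egonet quantities $|E_{ego(i)}|$, $|E^o_{ego(i)}|$, $|N(ego(i))|$ are obtained by scanning the adjacency lists of $i$'s neighbors; and writing the resulting length-$f$ row of $F_{G_j}$ costs $O(f)$. Summing the $O(f)$ row writes over all nodes gives $O(f n_j)$, and summing the adjacency-scanning work gives a term proportional to the total adjacency traffic, which is $O(m_j)$; invoking the stated empirical regularity $m_j \approx n_j \log n_j$ absorbs this into $O(f n_j \log n_j)$. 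Hence \textsc{GetFeatures} on $G_j$ runs in $O(f n_j + f n_j \log n_j)$.

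Next I would analyze \textsc{Aggregator} on $F_{G_j}$, which has $f$ feature columns of length $n_j$ each. The aggregators \emph{mean}, \emph{stdev}, \emph{skewness}, and \emph{kurtosis} are each a fixed number of linear passes over a column, i.e.\ $O(n_j)$ apiece; \emph{median} is the only aggregator that is not a one-pass statistic, and computing it by sorting the column costs $O(n_j \log n_j)$ (a linear-time selection would even remove the logarithm, but sorting is what the implementation uses). So each column costs $O(n_j \log n_j)$, all $f$ columns cost $O(f n_j \log n_j)$, and appending the $r$ aggregates per column to $\vec{s}_{G_j}$ adds only $O(f r)$. Adding the two stages for $G_j$ and summing over the $k$ graphs yields $O(\sum_{j=1}^k (f n_j + f n_j \log n_j))$, which, since $f$ and $r$ are small constants and $n_j \log n_j \approx m_j$ in real-world graphs, is linear in the total number of edges $\sum_j m_j$.

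The step I expect to be the real sticking point is the claim that the three egonet-based features fit within this budget: a worst-case bound for computing $|E_{ego(i)}|$ over all nodes (equivalently, counting triangles through every node, as in the \nodeiterator-style computation) is $\Theta(\sum_i d_i^2)$, which is superlinear for skewed-degree graphs, so the lemma as stated leans on the sparsity / bounded-effective-degree behavior of real networks (encoded in the $n_j \log n_j \approx m_j$ assumption) rather than on a worst-case guarantee. I would handle this by stating that assumption explicitly, or by noting that exact egonet computation can be replaced by a linear-time approximation without affecting the downstream aggregates; with that caveat in place, everything else is routine loop accounting.
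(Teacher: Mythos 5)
Your proposal is correct and follows the same two-stage decomposition as the paper (bound \textsc{GetFeatures} and \textsc{Aggregator} separately, then sum over the $k$ graphs); your treatment of the aggregation stage --- four one-pass moments at $O(n_j)$ each plus a sort-based median at $O(n_j \log n_j)$ per column --- is essentially identical to the paper's. Where you diverge is the feature-extraction stage: the paper dispenses with it in one sentence by citing the result of Henderson et al.~\cite{Refex11} that neighborhood-based features take expected $O(n_j)$ time on real-world graphs, and so states the extraction cost as $O(f n_j)$ outright, whereas you do the loop accounting from first principles, charge the adjacency traffic as $O(m_j)$, and absorb it via the empirical regularity $m_j \approx n_j \log n_j$. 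Your version is the more honest one: you correctly flag that exact computation of $|E_{ego(i)}|$ over all nodes is a per-node triangle count with worst-case cost $\Theta(\sum_i d_i^2)$, which is superlinear on skewed-degree graphs, so the lemma's ``linear in edges'' claim is an expected-case statement about real-world graphs rather than a worst-case guarantee. The paper buries exactly this caveat inside the word ``expected'' and the citation; making the assumption explicit, as you do, is a genuine improvement in rigor, and the bound you arrive at, $O(\sum_{j=1}^k (f n_j + f n_j \log n_j))$, matches the lemma.
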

\begin{proof}  To generate \simgraph's ``signature'' vectors, features need to be extracted and then aggregated.

\emph{Feature Extraction:}~Recall that \simgraph is computing local and neighborhood-based structural features.  As proved in~\cite{Refex11}, computation of neighborhood-based features is expected to take $O(n_j)$ for real-world graphs.  Therefore to compute $f$ neighborhood-based features on a graph $G_j$, it takes $O(f n_j)$.

\emph{Feature Aggregation:}~This is $O(f n_j log(n_j))$ for each graph $G_j$.  Recall that \simgraph's aggregators are median, mean, standard deviation, skewness, and kurtosis.  The latter four can be computed in one-pass through the $f$ feature values.  The most expensive computation is the median which cannot be done in one-pass.  However, it can be computed in $O(n log(n) + n)$ for $n$ numbers.  Basically, one needs $O(n log(n))$ to sort the $n$ numbers.  Then, a selection algorithm can be used to get the median with only O(n) operations.
\end{proof}

\textbf{Remark: Network comparison through statistical hypothesis testing.} Given the $node \times feature$ matrices of two graphs, $F_{G_1}$ and $F_{G_2}$, \simgraph can use statistical hypothesis testing to see if the two graphs are samples from the same underlying distribution.  Specifically, \simgraph normalizes each column (i.e. feature) in $F_{G_1}$ and $F_{G_2}$ by its $L_2$ norm.  Then, \simgraph does pairwise hypothesis testing across the features of the graphs.  For example, it does hypothesis testing between the degree columns in $G_1$ and $G_2$; between the clustering coefficient columns in $G_1$ and $G_2$; and so on.  This process produces seven $p$-$values$ (corresponding to the seven features extracted by \simgraph).  To decide whether the two graphs are from the same underlying distribution, \simgraph uses the maximum $p$-$value$.  We also tried the average of the $p$-$values$, though that analysis did not produce as discriminative results as the maximum $p$-$value$. 

For the statistical hypothesis tests, \simgraph can use any test available. We tried the Mann-Whitney Test~\cite{mannwhitney} and the Kolmogorov-Smirnov Test~\cite{kstest}.  The Mann-Whitney Test is nonparametric.  It assumes two samples are independent and measures whether the two samples of observations have equally large values. The Kolmogorov-Smirnov Test is also nonparametric.  We used the two-sample Kolmogorov-Smirnov Test which compares two samples w.r.t.~the location and shape of the empirical cumulative distribution functions of the two samples.  We found that neither test generated enough discriminative power\footnote{We informally define discriminative power to be the power to make fine distinctions. More details in Section~\ref{sec:exp}.} to effectively capture differences between graphs (though the Mann-Whitney Test was more discriminative).  See Section~\ref{sec:exp} for details. 
 
\textbf{Remark: Network comparison at the local- vs. global-level.} Whether one prefers local-level network similarity to global-level network similarity depends on the application for which the similarity is being used.  \simgraph is designed such that it can take either local-level or global-level features.  Here, we emphasis \simgraph's local-level network similarity.   The advantages of local-level comparison is that node-level and egonet-level features are often more interpretable than global features -- e.g., consider average degree of a node vs. the number of distinct eigenvalues of the adjacency matrix.  Also, local-level features are computationally less expensive than global-level features -- e.g., consider clustering coefficient of a node vs. diameter of the graph. Moreover, looking at local-level features answers the question: ``are the given two networks from similar linking models?''  For example, consider the Facebook and Google+ social networks.  Even though Google+ is a smaller network than Facebook, are its users linking in a similar way to the users of the Facebook network?  In other words, is the smaller Google+ network following a similar underlying model as the lager Facebook network?  Local-level features can capture any similarity present in the linking models of the two networks, but global-level features cannot.

\section{Experiments}
\label{sec:exp}

This section is organized as follows.  First, we outline the real and synthetic datasets used in our experiments, as well as our experimental setup.  Second, we describe two baseline methods ``FSM'' (Frequent Subgraph Mining) and ``EIG'' (Eigenvalues Extraction).  Third, we present results that answer the following questions: How do the different approaches compare? Is there a particular method which clearly outperforms the others? If yes, to which extent? How can we interpret the results? Can we build a taxonomy over the networks based on our results? Is \simgraph affected by the sizes of the networks? How do the proposed methods scale? How well does \simgraph perform in various graph mining applications?

\subsection{Data and Experimental Setup}
\label{subsec:data}

\begin{table}[t!]
\centering\footnotesize\setlength{\tabcolsep}{1.5pt}
\scalebox{0.91}{
\begin{small}
\begin{tabular}{|c|c||r|r|r|r|r|r|r|}
\hline
\multicolumn{2}{|c||}{\bf Network} &  \multicolumn{1}{c|}{\bf $|\text{V}|$} &  \multicolumn{1}{c|}{\bf $|\text{E}|$} &  \multicolumn{1}{c|}{\bf $k$} &  \multicolumn{1}{c|}{\bf Net c} &  \multicolumn{1}{c|}{\bf $\bar{\text{c}}$} &  \multicolumn{1}{c|}{\bf $|\text{LCC}|$} & \multicolumn{1}{c|}{\bf \#CC}\\
\hline 
\hline
\multirow{5}{*}{\begin{sideways}arXiv\end{sideways}} 
& a-AstroPh & 18,772 & 396,160 & 42.21 & 0.318 & 0.677 & 17,903 & 290\\
& a-CondMat & 23,133 & 186,936 & 16.16 & 0.264 & 0.706 & 21,363 & 567\\
& a-GrQc & 5,242 & 28,980 & 11.06 & 0.630 & 0.687 & 4,158 & 355\\
& a-HepPh & 12,008 & 237,010 & 39.48 & 0.659 & 0.698 & 11,204 & 278\\
& a-HepTh & 9,877 & 51,971 & 10.52 & 0.284 & 0.600 & 8,638 & 429\\\hline
\multirow{6}{*}{\begin{sideways}DBLP-C\end{sideways}}
& d-vldb & 1,306 & 3,224 & 4.94 & 0.597 & 0.870 & 769 & 112\\
& d-sigmod & 1,545 & 4,191 & 5.43 & 0.601 & 0.856 & 1,092 & 116\\
& d-cikm & 2,367 & 4,388 & 3.71 & 0.560 & 0.873 & 890 & 361\\
& d-sigkdd & 1,529 & 3,158 & 4.13 & 0.505 & 0.879 & 743 & 189\\
& d-icdm & 1,651 & 2,883 & 3.49 & 0.518 & 0.887 & 458 & 281\\
& d-sdm & 915 & 1,501 & 3.28 & 0.540 & 0.870 & 243 & 165\\\hline
\multirow{5}{*}{\begin{sideways}DBLP-Y\end{sideways}}
& d-05 & 39,357 & 79,114 & 4.02 & 0.415 & 0.642 & 29,458 & 3,229\\
& d-06 & 44,982 & 94,274 & 4.19 & 0.379 & 0.632 & 35,223 & 3,140\\
& d-07 & 47,465 & 103,957 & 4.38 & 0.373 & 0.628 & 38,048 & 3,078\\
& d-08 & 47,350 & 107,643 & 4.55 & 0.378 & 0.612 & 38,979 & 2,849\\
& d-09 & 45,173 & 102,072 & 4.52 & 0.331 & 0.595 & 36,767 & 2,920\\\hline
\multirow{5}{*}{\begin{sideways}IMDb\end{sideways}}
& i-05 & 13,805 & 130,295 & 18.88 & 0.506 & 0.774 & 13,075 & 258\\
& i-06 & 14,228 & 142,955 & 20.09 & 0.480 & 0.760 & 13,458 & 269\\
& i-07 & 13,989 & 133,930 & 19.15 & 0.476 & 0.757 & 13,091 & 256\\
& i-08 & 14,055 & 132,007 & 18.78 & 0.469 & 0.750 & 13,313 & 273\\
& i-09 & 14,372 & 128,926 & 17.94 & 0.442 & 0.728 & 13,601 & 277\\\hline
\multirow{5}{*}{\begin{sideways}Query Log\end{sideways}}
& ql-1 & 138,976 & 1,102,606 & 15.87 & 0.055 & 0.599 & 132,012 & 3,238\\
& ql-2 & 108,420 & 876,517 & 16.17 & 0.055 & 0.594 & 103,095 & 2,482\\
& ql-3 & 89,406 & 707,579 & 15.83 & 0.053 & 0.588 & 85,246 & 1,941\\
& ql-4 & 75,838 & 582,703 & 15.37 & 0.051 & 0.583 & 72,396 & 1,600\\
& ql-5 & 42,946 & 253,469 & 11.80 & 0.047 & 0.573 & 40,691 & 1,027\\ \hline
\multirow{5}{*}{\begin{sideways}Oregon AS\end{sideways}}
& o-1 & 10,900 & 31,181 & 5.72 & 0.039 & 0.501 & 10,900 & 1\\
& o-2 & 11,019 & 31,762 & 5.76 & 0.040 & 0.495 & 11,019 & 1\\
& o-3 & 11,113 & 31,435 & 5.66 & 0.034 & 0.490 & 11,113 & 1\\
& o-4 & 11,260 & 31,304 & 5.56 & 0.032 & 0.487 & 11,260 & 1\\
& o-5 & 11,461 & 32,731 & 5.71 & 0.037 & 0.494 & 11,461 & 1\\
\hline
\end{tabular}
\end{small}
}
\caption{Real networks: \#nodes, \#edges, avg degree, network clustering coefficient (transitivity),  avg node clustering coefficient, \#nodes in the largest connected component, \#connected components.}
\label{tab:networks}
\end{table}

\textbf{Real Networks.} Table \ref{tab:networks} lists the basic statistics of the real networks used in our experiments. Here is a short description of each network.

\bit
\item \textbf{arXiv} (\url{http://arxiv.org}): Five different co-authorship networks corresponding to the following fields: Astro Physics, Condensed Matter, General Relativity, High Energy Physics and High Energy Physics Theory.

\item \textbf{DBLP-C} (\url{http://dblp.uni-trier.de}): Six different co-authorship networks from VLDB, SIGKDD, CIKM, ICDM, SIGMOD and WWW conferences, each spanning over 5 years (2005-2009).

\item \textbf{DBLP-Y}: Five different co-authorship networks, each corresponding to one of the years from 2005 to 2009 and consisting of data from 31 conferences.

\item \textbf{IMDb} (\url{http://www.imdb.com}): Five collaboration networks for movies issued from 2005 to 2009.  Each node represents a person who took part in the movie (i.e., cast and crew). Edges connect people who collaborated on a movie. 
 
\item \textbf{QueryLog} (\url{http://www.gregsadetsky.com/aol-data}):  Five word co-occurrence networks built from a query-log of approximately 20 millions web-search queries submitted by 650,000 users over 3 months.
  
\item \textbf{Oregon AS} (\url{http://snap.stanford.edu/data/}): 
 Five Autonomous Systems (AS) routing graphs between March 31st and May 26th 2001. 
\eit

\textbf{Synthetic Networks.} Apart from the real networks, we also produced several synthetic networks by using the following generators from the igraph library (\url{http://igraph.sourceforge.net}): 

\bit
\item \textbf{Barab\'{a}si-Albert}~\cite{generator-BA}: With a non-assortative version of the generator, we created graphs with 1K, 10K, and 100K nodes, adding 4 edges in each iteration.

\item \textbf{Forest-Fire}~\cite{generator-FF}: We generated graphs of size 1K, 10K, and 100K nodes, with 20\% forward burning probability, 40\% backward burning probability, and 4 ambassador vertices.
 
\item \textbf{Erd\"{o}s-R\'enyi}~\cite{generator-ER}: We used the $G(n, m)$ generator, where $n$ is the number of nodes and $m$ the number of edges,  and produced graphs $G(n, 2n)$ with 1K, 10K, and 100K nodes.

\item \textbf{Watts-Strogatz}~\cite{generator-WS}: We built graphs of size 200, 2K, and 20K nodes by setting the lattice dimension to 1, the degree to 4, and the rewiring probability to 0.3.
\eit

For each generator and for each node-set size, we built five networks. Our results report the average values obtained across the five networks per generator and node-set size.

\textbf{Experimental Setup.} We implemented our approach in C++ and Matlab, making use of the GNU Statistic Libraries and igraph. The code was run on a server equipped with 8 Intel Xeon processors at 3.0GHz, with 16GB of RAM, and running CentOS 5.2 Linux.

\subsection{Baseline Methods}
We compare \simgraph with (a) Frequent Subgraph Mining and (b) Eigenvalues Extraction. We chose these two methods because they are intuitive and widely applicable.  Many methods discussed in Section \ref{sec:background} are application-dependent.

\textbf{FSM (Frequent Subgraph Mining):} Given two graphs, we take the intersection of their frequent pattern-sets and build two vectors (one per graph) of relative supports of their patterns~\cite{Berlingerio09}.  We compare these \fsm vectors with \simgraph's ``signature'' vectors using Cosine Similarity and Canberra Distance. A clear drawback of \fsm is its lack of scalability (since it relates to subgraph isomorphism).

\textbf{EIG (Eigenvalues Extraction):} This is an intuitive measure of network similarity that is based on \emph{global} feature extraction (as opposed to the \emph{local} feature extraction of \simgraph). For each graph, we compute the $k$ largest eigenvalues\footnote{We tried a few values for $k$ and saw no significant changes around $10$; so we selected $k=10$.} of its adjacency matrix, and thus we obtain a vector of size $k$ per graph.  Then, we use the Canberra Distance in order to compare these vectors and find the pairwise similarities between the graphs.  A disadvantage of \eig is that it is size dependent: larger networks - or ones with larger LCC (Largest Connected Component) - have higher eigenvalues. Thus, \eig will lead to higher similarity between networks with comparable sizes. Moreover, there is no global upper-bound for eigenvalues, making distance values hard to compare.

\begin{table}[th!]
\centering\small
\begin{tabular}{|l||c|c|c|}\hline
\multicolumn{1}{|c||}{\bf Desired Properties} & {\bf \simgraph } & {\bf FSM} & {\bf EIG } \\\hline  \hline
{\bf Scalable} & \checkmark &  &  \checkmark  \\
{\bf Size-independent} &  \checkmark &  \checkmark &  \\
{\bf Intuitive} &  \checkmark & \checkmark &  \checkmark\\
{\bf Interpretable} & \checkmark & \checkmark & \\\hline
\end{tabular}\vspace{-2mm}
\caption{Properties of \simgraph and baselines}\label{tab:properties}
\vspace{-2mm}\end{table}

\begin{figure*}
\centering\footnotesize
\begin{tabular}{ccc}
\hspace{-8mm}\includegraphics[clip=true,trim=0 0 0 20, scale=0.5]{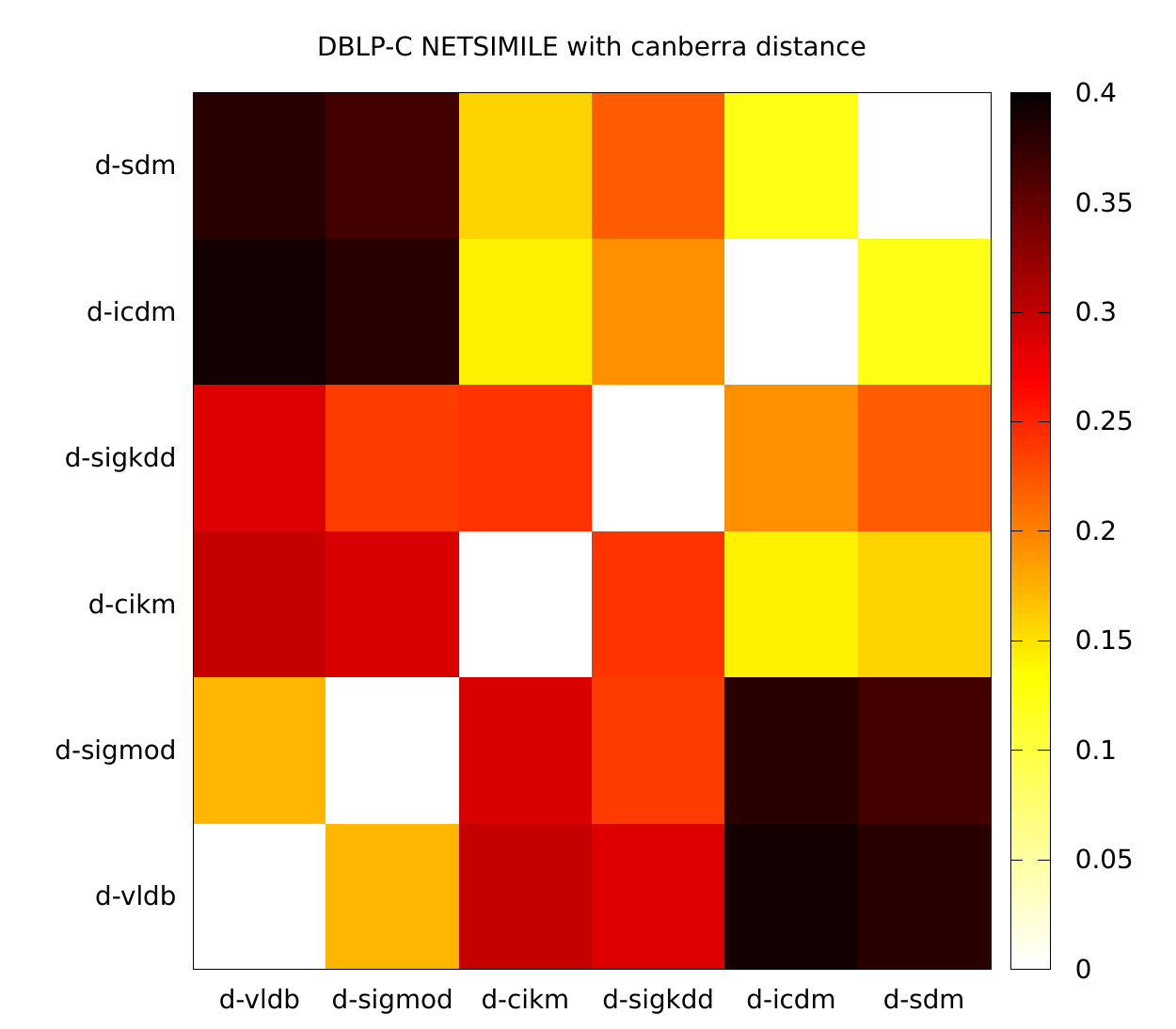} &
\hspace{-8mm}\includegraphics[clip=true,trim=15 0 0 20, scale=0.5]{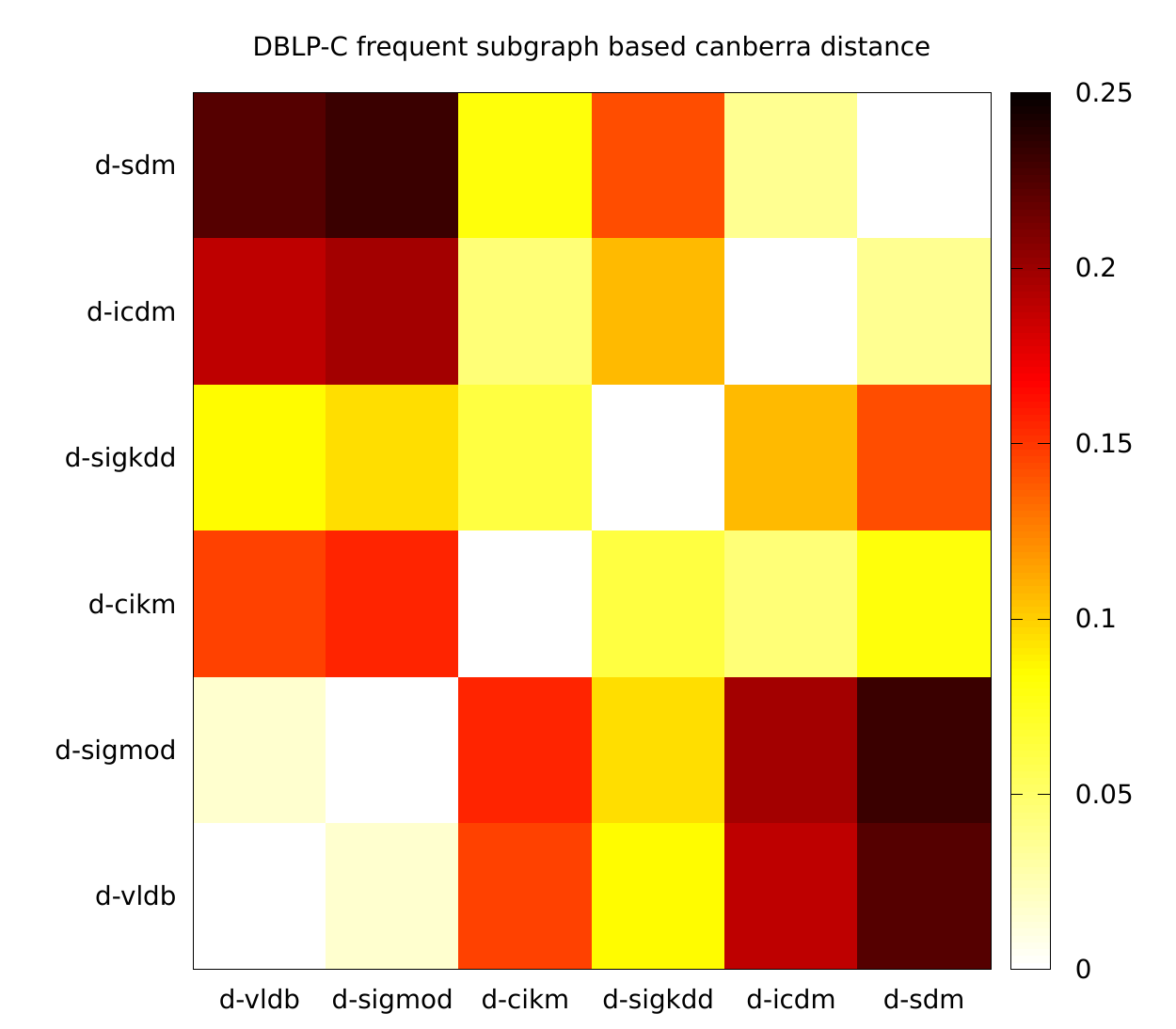} &
\hspace{-8mm}\includegraphics[clip=true,trim=15 0 0 20, scale=0.5]{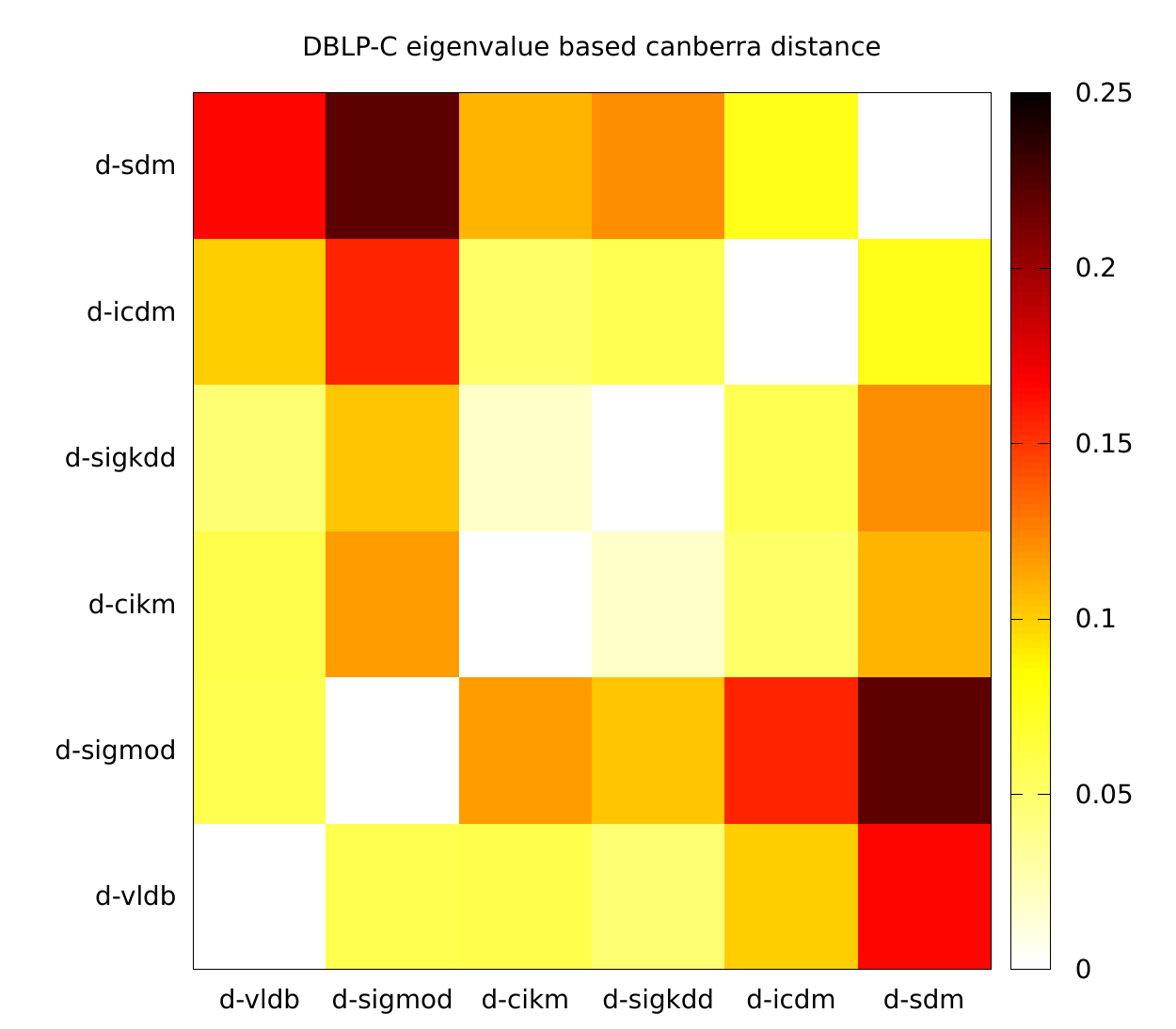} \\
\hspace{-8mm}(a) \simgraph Canberra Distance on DBLP-C    &
\hspace{-8mm}(b) \fsm Canberra Distance on DBLP-C  &
\hspace{-8mm}(c) \eig Canberra Distance on DBLP-C  \\
\hspace{-8mm}\includegraphics[clip=true,trim=0 0 0 20, scale=0.5]{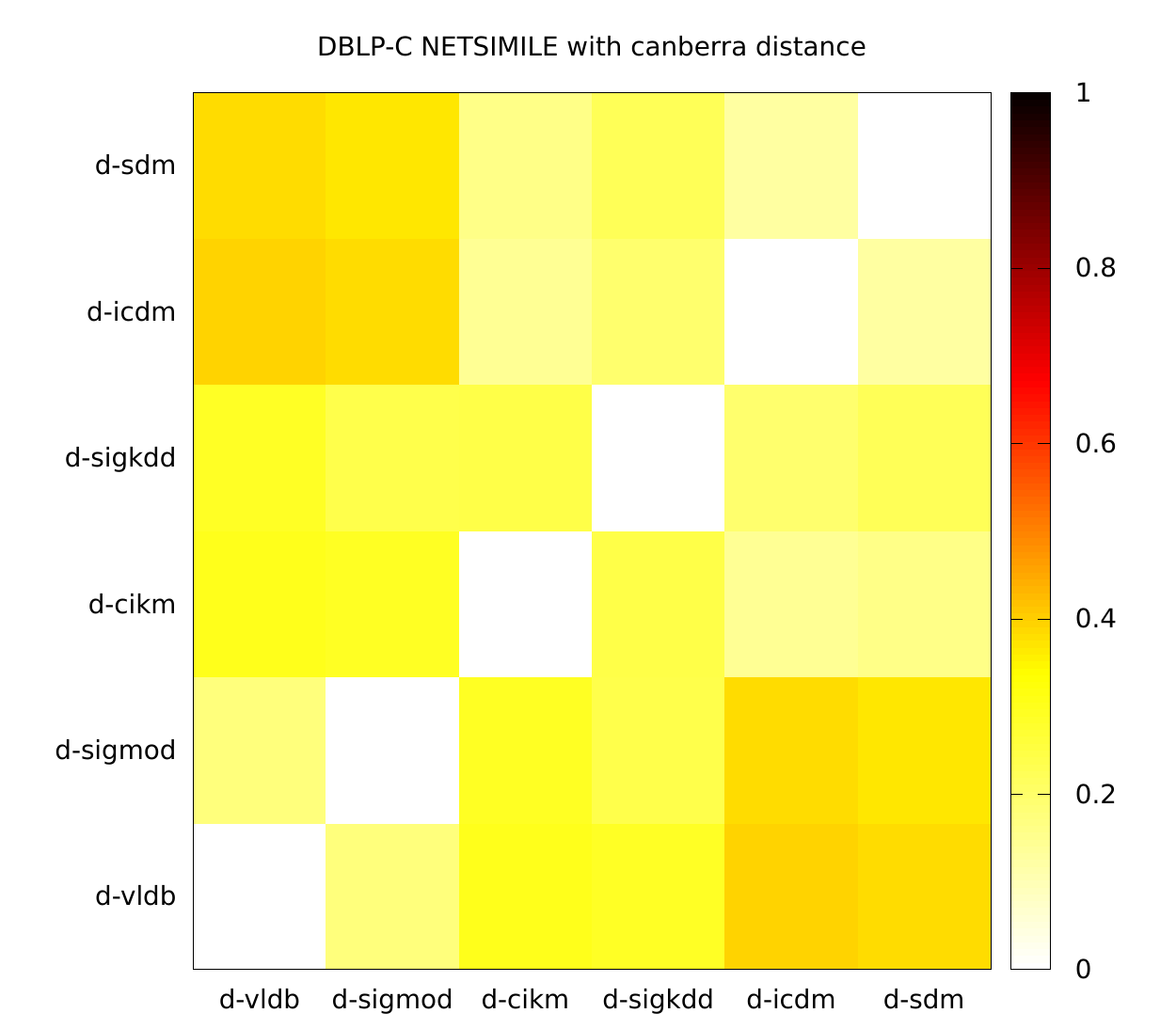} &
\hspace{-8mm}\includegraphics[clip=true,trim=0 0 0 20, scale=0.5]{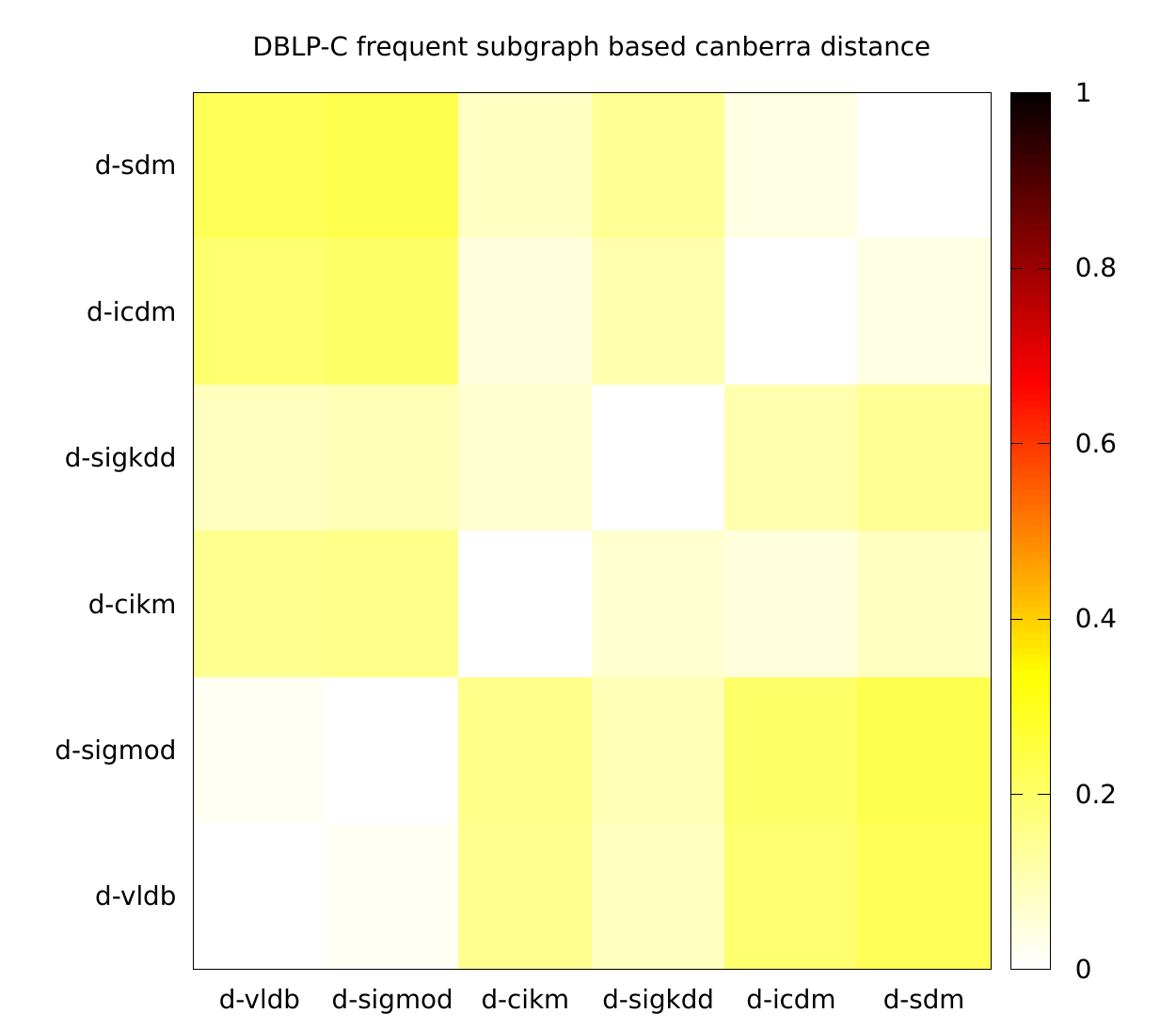} &
\hspace{-8mm}\includegraphics[clip=true,trim=0 0 0 20, scale=0.5]{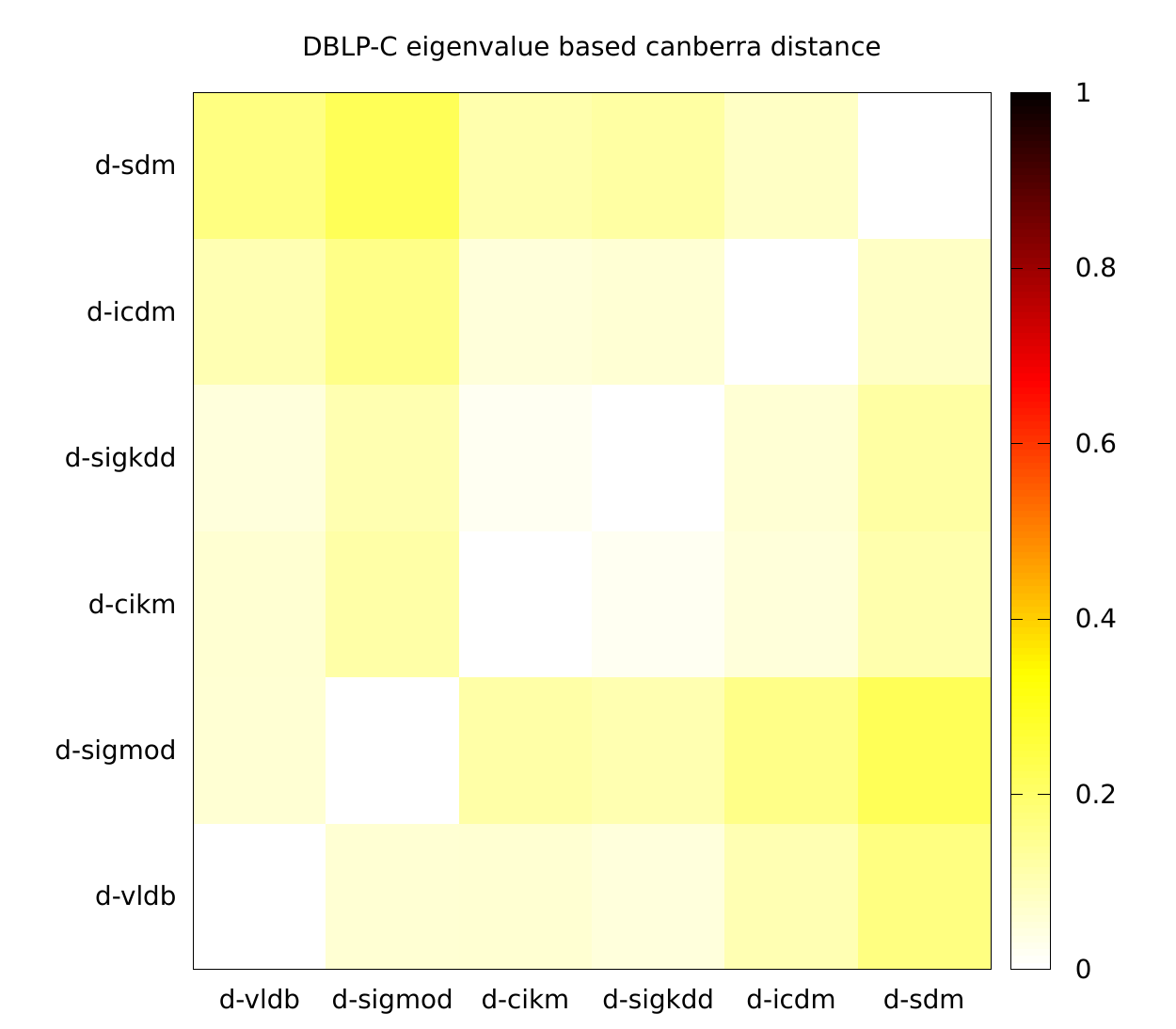} \\
\hspace{-2mm}(d) \simgraph Scaled Canberra Distance on DBLP-C   &
\hspace{-6mm}(e) \fsm Scaled Canberra Distance on DBLP-C    &
\hspace{-7mm}(f) \eig Scaled Canberra Distance on DBLP-C    \\
\end{tabular}
\caption{Comparative results: Canberra Distance scores between the DBLP-C networks by \simgraph (1st col.),  FSM (2nd col.), and  EIG (3rd col.). Second row is the scaled Canberra Distance $\in [0,1]$.
The baseline methods do not have the discriminative power of \simgraph.}\label{fig:coscandblpscaled}
\vspace{-3mm}\end{figure*}

\subsection{Comparative Results}
Table \ref{tab:properties} summarizes the basic properties of \simgraph, \fsm, and \eig.   \simgraph is the only one having the desired properties of being scalable, size-independent, intuitive, and interpretable. Here, being intuitive means providing results that are comparable with the background knowledge that we have about the data. For example, intuitively, a co-authorship network should be more similar to another co-authorship network than to a technological network like autonomous systems. Moreover, interpretability is given by the low complexity of the theory behind the concepts that build \simgraph. In fact, the signature vectors returned by \simgraph are comparing the moments of distributions of local (i.e. neighborhood-based) features of the two graphs, and these are commonly understandable measures. Empirical support follows next.

For each method (\simgraph, \fsm, and \eig), after extracting features from the graphs and obtaining one (aggregated) feature vector per graph, we apply the Canberra Distance.\footnote{For brevity, we do not report results on Cosine Similarity and the other similarity/distance measures, which we tried.  Most results are highly correlated \cite{survey_sim_measures}.} We also report results of the Mann-Withney U test on \simgraph's local feature distributions.  This statistical test is unsuitable for \eig and non-trivial for \fsm. 

Figure~\ref{fig:coscandblpscaled} depicts the results of a set of experiments involving the Canberra Distance on the DBLP-C datasets. The columns in Figure~\ref{fig:coscandblpscaled} correspond to the heatmaps we obtained from \simgraph, \fsm and \eig, respectively. The first row reports the results from the Canberra Distance. The second row reports the results from the {\em scaled} Canberra Distance, where each value is in [0,1].

Inspecting Figures~\ref{fig:coscandblpscaled}(a)-(b), we observe that the results of \simgraph are similar to \fsm.  For instance, according to both, the d-vldb network is similar to d-sigmod, which is not so similar to d-sdm.  However, the discriminative power of \simgraph becomes evident when we inspect Figures~\ref{fig:coscandblpscaled}(d)-(e).  In these figures, the Canberra Distance is scaled, so the results can be compared on an equal footing.  We observe that the scaled values of \fsm are much less discriminative.

Reexamining Figure \ref{fig:coscandblpscaled}, we also observe that the results from \eig differ from the ones from \simgraph and \fsm. According to \eig, d-vldb has no significant differences with d-sigmod, d-cikm, and d-sigkdd; while \simgraph and \fsm found differences. Moreover, there is no global normalization for the \eig values.\footnote{It is possible to do pairwise normalization by the number of nodes, but this is not general for any set of networks.}  Thus, global comparisons of a set of networks are harder to interpret with \eig than with \simgraph and \fsm.

As another point of comparison, we measure the entropy in feature vectors generated by \simgraph, \fsm, and \eig on the DBLP-C co-authorship networks.  As Figure~\ref{fig:entropy} shows, \simgraph's feature vectors have higher entropy than \fsm's or \eig's.  Higher entropy means more uncertainty (i.e., we need more bits to store the desired information).  So, \simgraph's feature vectors capture the nuances (i.e. uncertainty) in the graphs bettern than \fsm or \eig, which then leads to more discriminative power when comparing graphs.

\begin{figure}[!h]
\centering
\includegraphics[scale=0.7]{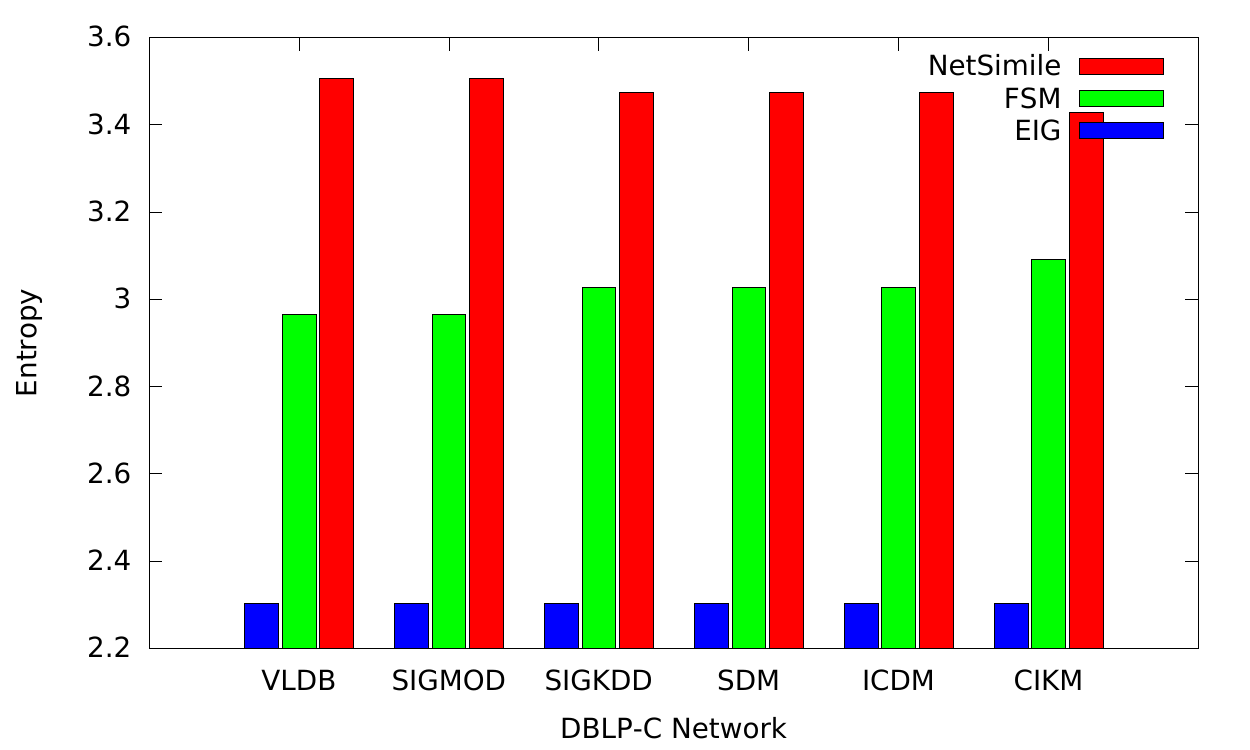}\\
\caption{Entropy of feature vectors generated by \simgraph, \fsm, and \eig on the DBLP-C co-authorship networks.  \simgraph's feature vectors have higher entropy than \fsm's or \eig's, which implies that they are capturing the nuances in the graphs better than \fsm or \eig.}
\label{fig:entropy}
\end{figure}

Figure~\ref{fig:heatmaps}(a) depicts results of \simgraph (Scaled Canberra Distance) on all datasets described in Section~\ref{subsec:data}.  Figure~\ref{fig:heatmaps}(b) shows the maximum p-values on \simgraph's local-feature distributions obtained by running the Mann-Withney U Test. The former (\simgraph Scaled Canberra Distance) is more discriminative in teasing out graph similarity than the latter (Mann-Whitney U Test). For instance, Figure~\ref{fig:heatmaps}(b) has many black grids corresponding to a maximum p-value of 0. So, even though some pairwise comparisons are possible, the Mann-Whithney U Test is not able to capture differences between \emph{any} two networks.  For brevity, we have omitted the average p-value results, which were comparable with the maximum p-value results. 

\begin{table}[tp]
\centering\small
\scalebox{1}{
\hspace{-2mm}
\begin{tabular}{|l|c|}
\hline
\multirow{3}{*}{{\bf Cluster 1}} & o-1 o-2 o-3 o-4 o-5 \\
& ql-1 ql-2 ql-3 ql-4 ql-5 \\
& r-1-bara r-10-bara r-100-bara \\
\hline \hline
\multirow{3}{*}{{\bf Cluster 2}} & r-1-er r-10-er r-100-er\\
&  r-1-ff r-10-ff r-100-ff \\
& r-1-ws r-10-ws r-100-ws \\
\hline \hline
{\bf Cluster} 3 & d-vldb d-sigmod d-cikm d-sigkdd d-icdm d-sdm \\
\hline \hline
\multirow{3}{*}{{\bf Cluster 4}} & a-AstroPh a-CondMat a-GrQc a-HepPh a-HepTh \\
& d-05 d-06 d-07 d-08 d-09\\
&  i-05 i-06 i-07 i-08 i-09  \\
\hline
\end{tabular}
}
\caption{\simgraph with \emph{x}-means classifies the networks in an intuitive way.}\label{tab:xmeans}\vspace{-3mm}
\end{table}

\subsection{Interpretability of Results}
To make sense of our results, we exploit the background knowledge about the networks used in our experiments.  Amid the real networks, we have three sets of collaboration networks (DBLP-C, DBLP-Y and IMDb), one technological network (Oregon AS), and a word co-occurrence network (Query Log). In addition, we have different synthetic networks generated by various commonly used models. One would expect these networks to be ``clustered'' by their types. This idea was inspired by the considerations found in \cite{Newman03thestructure}, where a large set of networks of different types are analyzed, together with their typical global and local features. For these experiments, we use two clustering algorithms: (1) agglomerative clustering~\cite{hierclustering} with Canberra Distance and unweighted average linking and (2) \emph{x}-means clustering~\cite{xmeans}.  We chose the former since hierarchical clustering allows for easy interpretation of results.  We chose the latter because it is a nonparametric version of \emph{k}-means, where the number of clusters $k$ is picked automatically through model selection.

Figure~\ref{fig:dendro}(a) presents the dendrogram of all of our networks built by hierarchical agglomerative clustering with unweighted average linking and the Canberra Distance and using \simgraph's graph ``signature'' vectors. The network names are colored by data set. As evident in Figure~\ref{fig:dendro}(a), there is a clear distinction between the clusters.  The collaboration networks appear all together, along with the forest fire synthetic networks.  The Oregon AS forms a cluster that only at the height of 0.45 joins with the Query Log.  The Erd\"{o}s-R\'enyi and Watts-Strogatz form a separate cluster. This, in turns, reflects our aforementioned intuition about following our background knowledge of the data. Similar results are obtained by applying the \emph{x}-means clustering on the vectors of local features, for which we report the outcome in Table \ref{tab:xmeans}. A part from the distribution of the random networks, the clusters reflect what we observe in Figure \ref{fig:dendro}(a). 

Figure \ref{fig:dendro}(b) shows the dendrogram for the above experiment (hierarchical agglomerative clustering with unweighted average linking and the Canberra Distance) for graph vectors generated by \eig.  This figure clearly shows a different picture, where the networks are grouped differently (see how the distribution of the colors is mixed). For example, in the leftmost cluster, two collaboration networks from arXiv are put together with four Query Log networks, while the missing Query Log network is placed together with the Oregon AS networks. The \eig results are not intuitive, thus making \eig not suitable for interpreting graph-similarity results.

\begin{figure*}[tp]
\centering
\begin{tabular}{cc}
\hspace{-7mm}\includegraphics[scale=0.45]{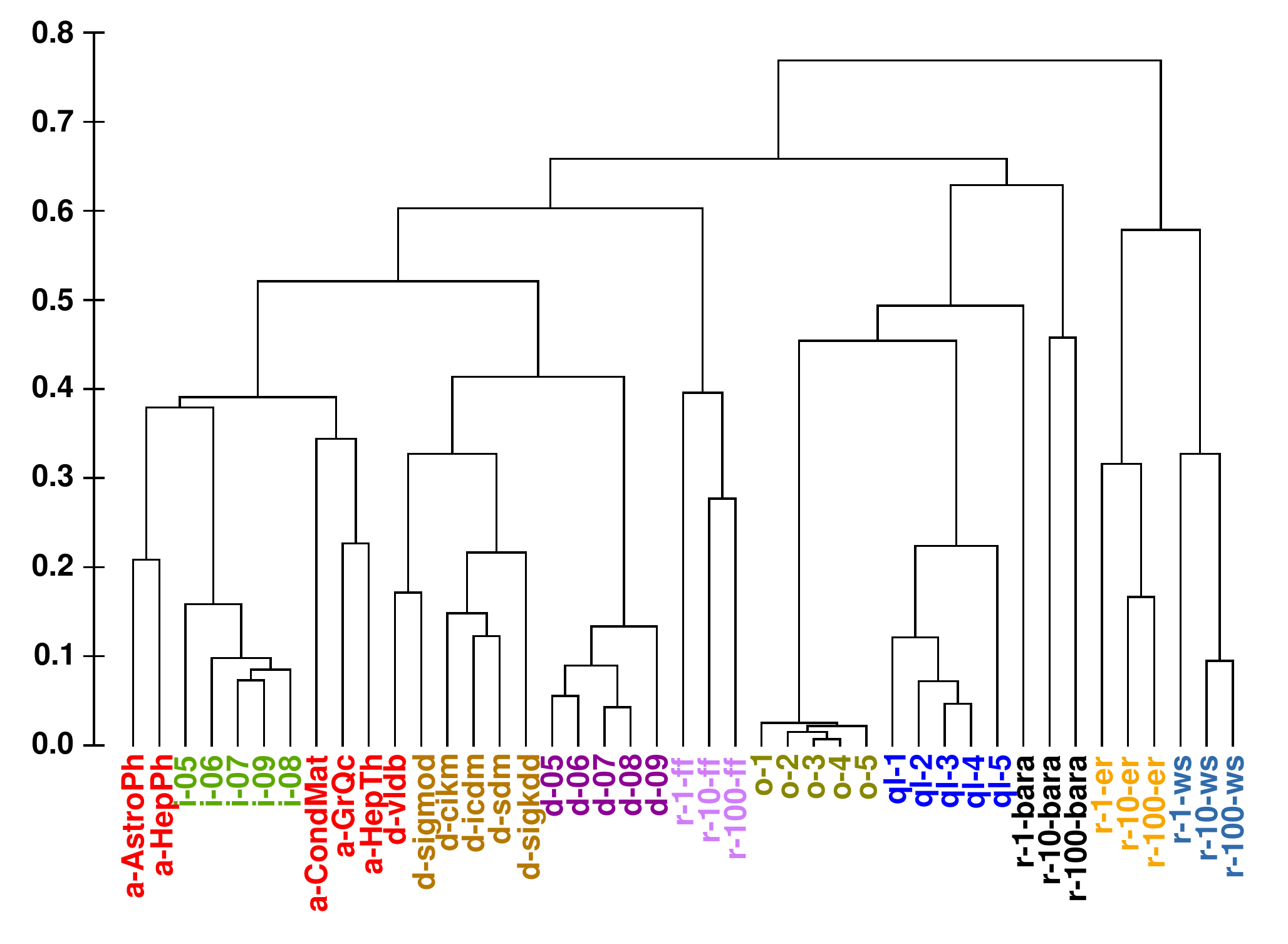}
&
\hspace{-3mm}\includegraphics[scale=0.45]{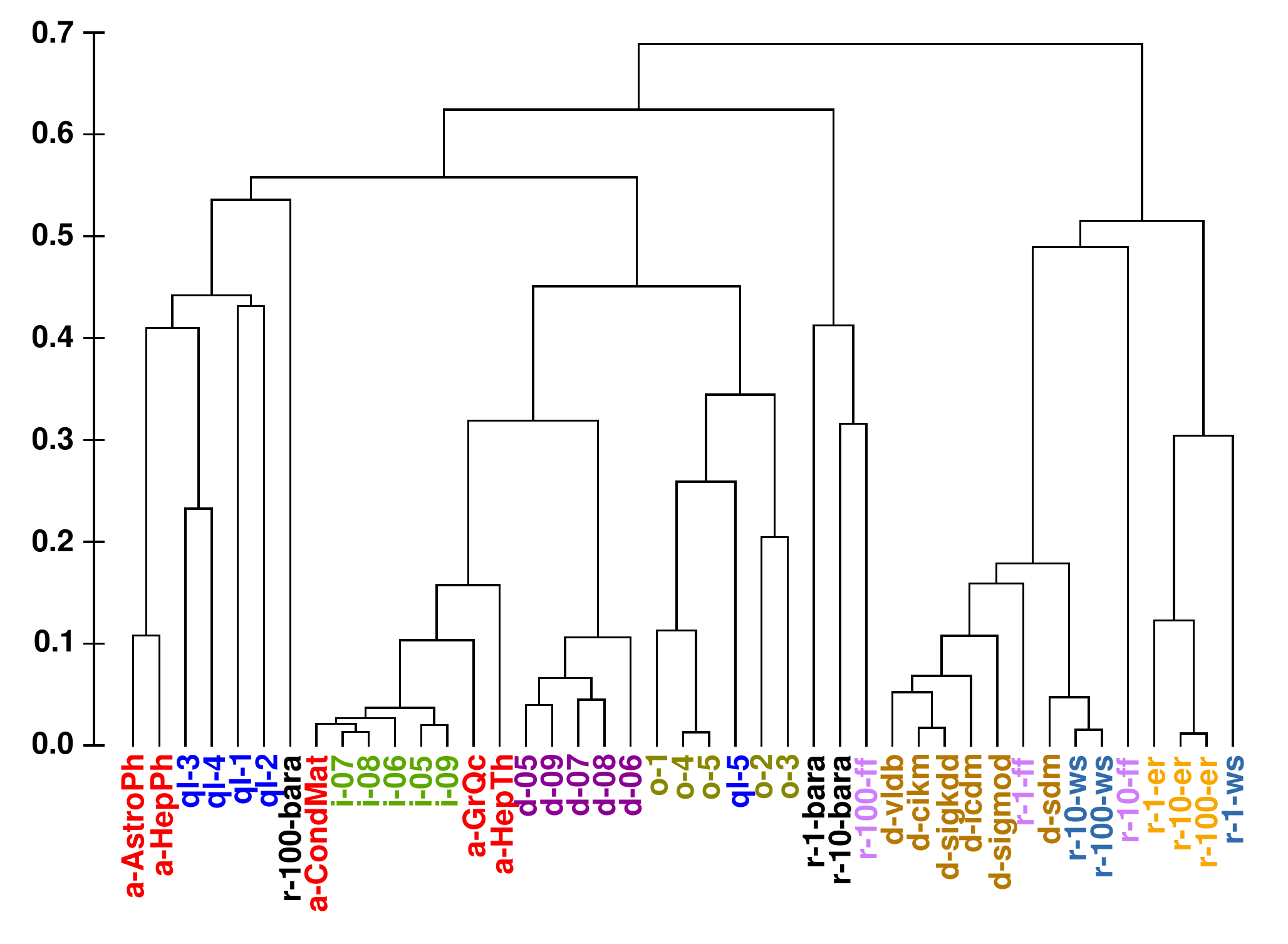}\vspace{-4mm}\\
(a) & (b)\vspace{-2mm}\\
\end{tabular}
\caption{Hierarchical dendrograms of all network based on (a) \simgraph with Canberra Distance, and (b) \eig with Canberra Distance.  Network names are colored by data set.  Homogeneity in colors (\simgraph's dendrogram) indicates better and more intuitive groupings (than \eig's dendrogram).}\label{fig:dendro}
\end{figure*}

\begin{figure}[b!]
  \centering\small
  \begin{tabular}{c}
\hspace{-7mm}\includegraphics[width=1.1\columnwidth]{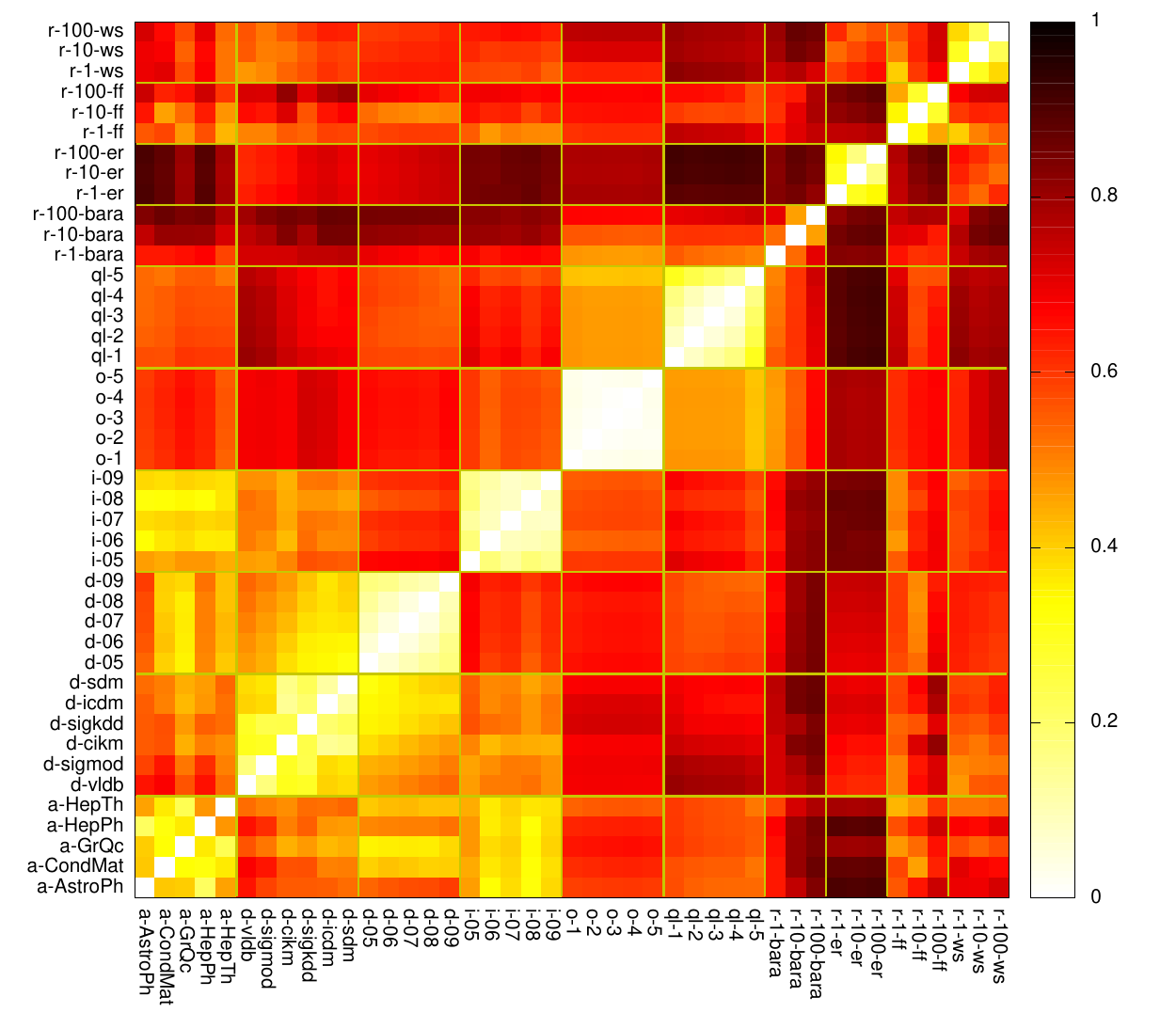}\\
(a) \simgraph with Canberra Distance\\
\hspace{-7mm} \includegraphics[width=1.1\columnwidth]{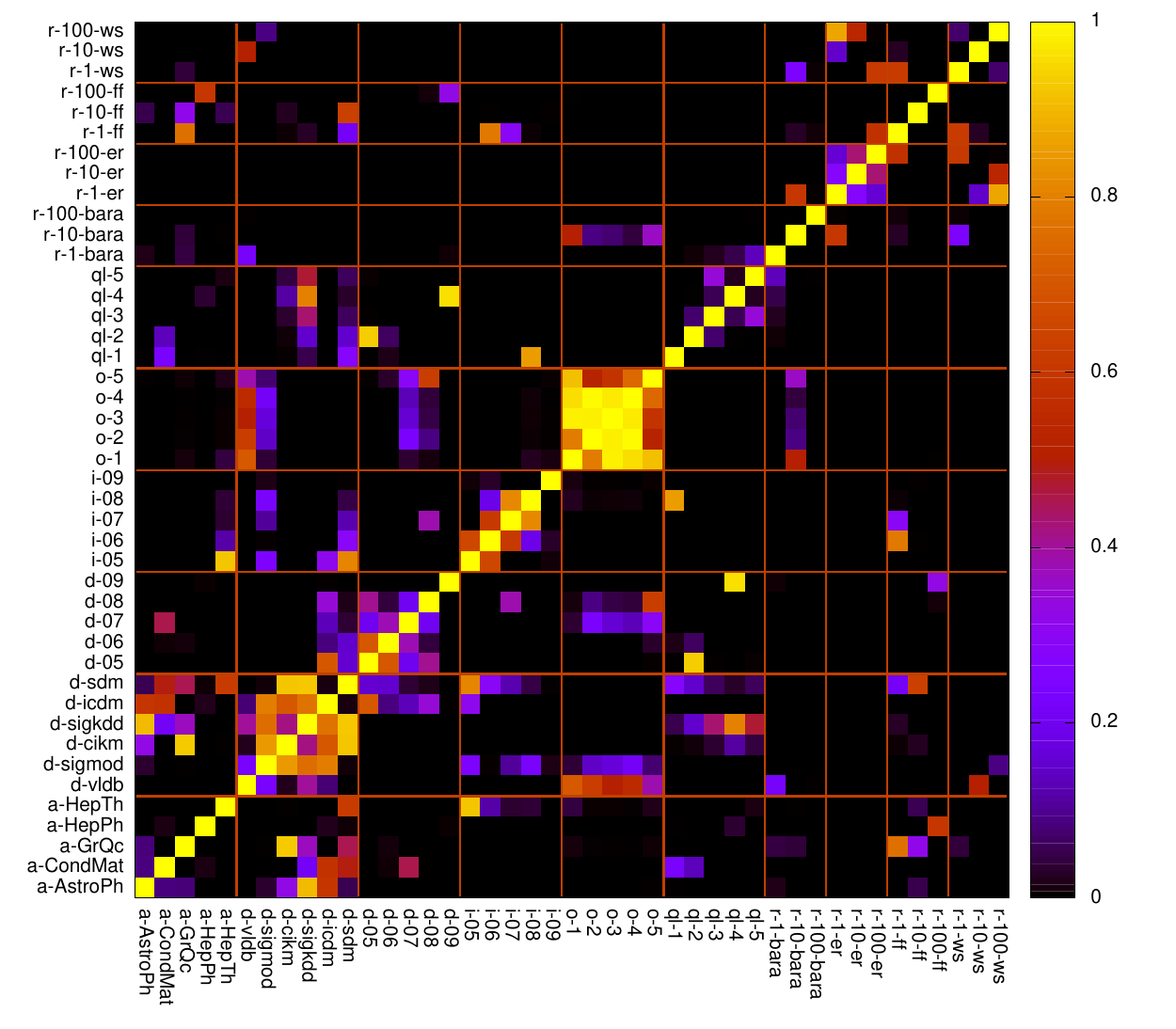}\\
(b)Mann-Whitney U Test:\\ Maximum p-values on Local-Feature Distributions
  \end{tabular}
  \caption{\simgraph outperforms the Mann-Whitney U test on local feature distributions by producing more discriminative results.  Plots are heatmaps of scores of all pairs of networks. Grid lines indicate ground truth, marking groups of networks. The ideal methods should have high scores (white in (a), yellow in (b)) on the diagonal blocks.}
  \label{fig:heatmaps}
  
\end{figure}

\subsection{Similarity of Networks with Different Sizes}
One question that may arise regarding \simgraph is whether its results are affected by the differences in sizes or other basic statistics of the two networks being compared. We do not want the size to play an important role in our solutions given that our interpretation of the question ``are two networks similar?'' leads to the question ``do the two networks follow the same (or similar) underlying linking model?''. 

To answer the aforementioned questions, we compared the relationships between the \simgraph with Canberra Distance and some basic statistics of our real and synthetic networks.  Specifically, we compared \simgraph values of two networks with the ratio between their (1) number of nodes, (2) number of edges, (3) average clustering coefficients of the nodes, (4) average degree, (5) maximum degree, and (6) network clustering coefficient.  In all of them, we saw no correlation.  For brevity, we only show the scatterplot for the \simgraph values and the ratio between the number of nodes of the two networks (see Figure~\ref{fig:coscansize}(a)) and the scatterplot for the \simgraph values and the ratio between the average clustering coefficients of the nodes of the two networks (see Figure~\ref{fig:coscansize}(b)).  As evident in these scatterplots, \simgraph's results are not merely reflecting the difference in sizes of the networks. If they were, we would expect to observe  correlations among the points in each scatterplot. This implies that we can generate two networks of the same kind, with different sizes (e.g., two Forest-Fire networks~\cite{generator-FF} of sizes 10K and 100K nodes) and \simgraph would find them similar.

\begin{figure}[t!]
\centering\small
\begin{tabular}{cc}
\hspace{-3mm}\includegraphics[width=.5\linewidth]{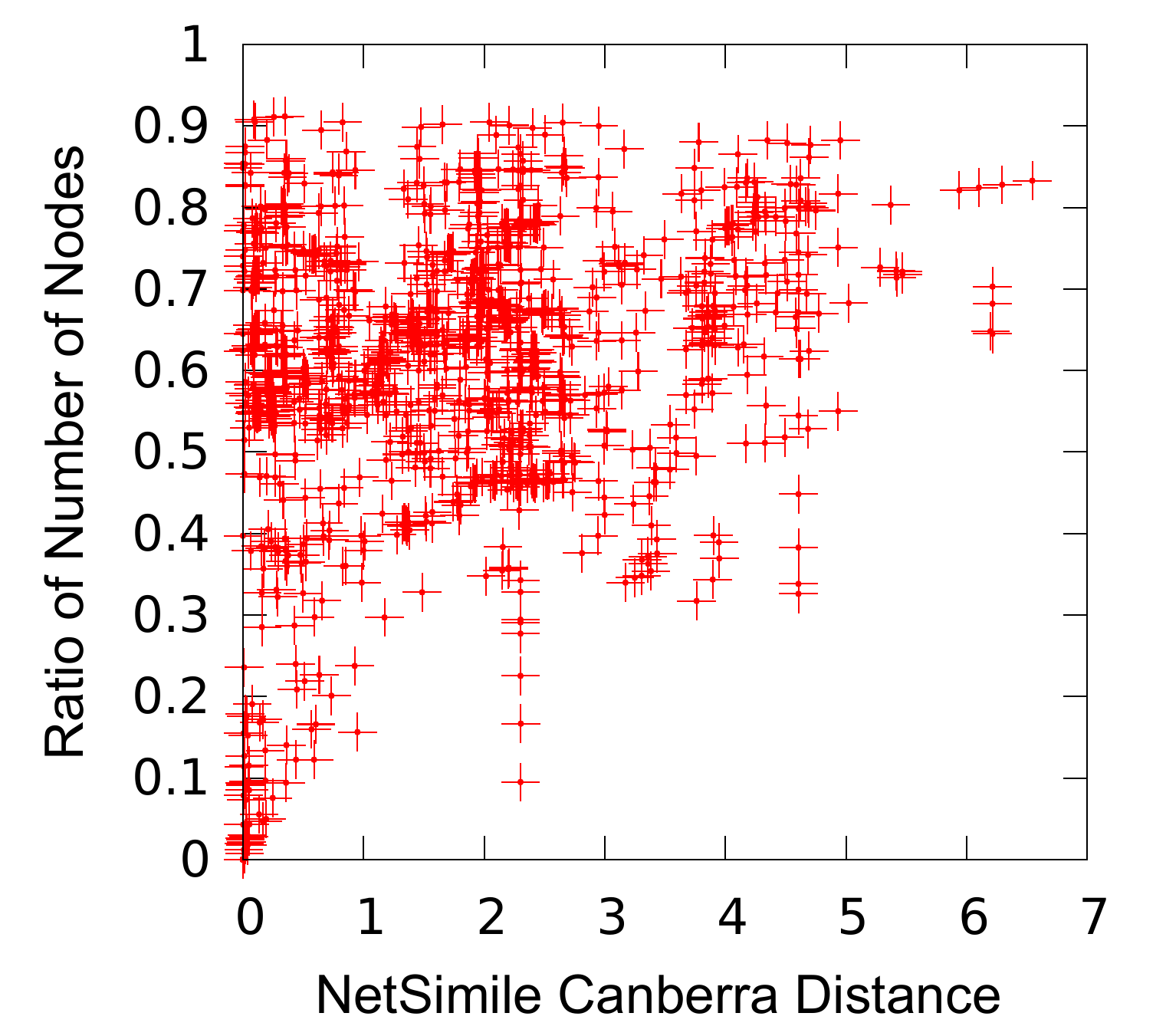} &
\hspace{-1mm}\includegraphics[width=.5\linewidth]{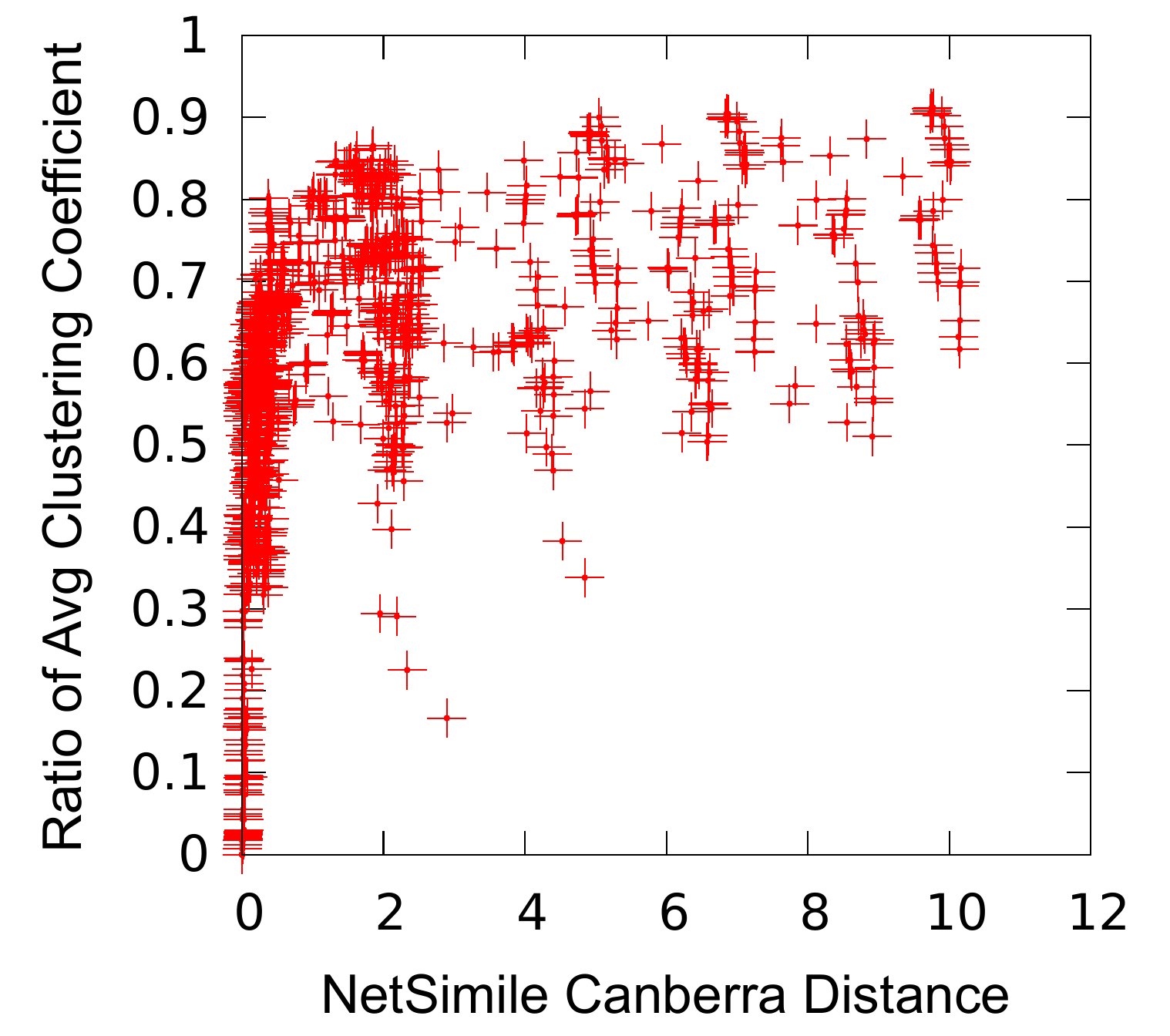}\\ 
(a) \simgraph vs. &
(b) \simgraph vs. ratio\\
ratio of \# nodes &
of avg.~clustering coefficient\\
\end{tabular}
\caption{\simgraph Canberra Distance is not measuring size, as there is no clear evidence of correlation between the two axes.}\label{fig:coscansize}
\end{figure}

\subsection{Scalability}

Table \ref{tab:scalability} reports the run times (in seconds) of \simgraph and the two baselines (\fsm and \eig) when applied to our real networks. Note that for \simgraph the run times refer to all the three steps described in Section~\ref{sec:meth}, with the comparison step constituted by the pairwise computation of both the Cosine Similarity and the Canberra Distance.  For \fsm we do not report running time longer than two days. 

\simgraph and \eig are able to compare graphs in a matter of seconds, though \eig produces results that are size-dependent. \fsm pays for its subgraph isomorphism, which considerably affects the performances. Note that \fsm is affected not only by the size of the network, but also by its type. While DBLP is a set of collaboration networks (with sparsely connected cliques), the Oregon AS (being a technology network) is made of one single connected component, thus the cost for the isomorphism becomes much higher. 

\begin{table}[t!]
      \centering\footnotesize\setlength{\tabcolsep}{1.5pt}
\scalebox{1}{
\begin{small}
      \begin{tabular}{|c|c||r|r|c|c|c|}
        \hline
        \multicolumn{2}{|c||}{\bf Network} &  \multicolumn{1}{c|}{\bf $|\text{V}|$} &  \multicolumn{1}{c|}{\bf $|\text{E}|$} &  \multicolumn{1}{c|}{\bf \simgraph}  & \multicolumn{1}{c|}{\bf FSM} & \multicolumn{1}{c|}{\bf EIG}\\
        \hline \hline
\multirow{5}{*}{\begin{sideways}arXiv\end{sideways}} 
& a-AstroPh &18,772 & 396,160 & 9 &$>$ 2 days & 6 \\ 
& a-CondMat & 23,133 & 186,936 & 2 & $>$ 2 days & 4 \\
& a-GrQc & 5,242 & 28,980 & 1 &$>$ 2 days & 1\\
& a-HepPh & 12,008 & 237,010 & 6  &$>$ 2 days&3 \\
& a-HepTh & 9,877 & 51,971 & 1 & $>$ 2 days&2 \\\hline
\multirow{6}{*}{\begin{sideways}DBLP-C\end{sideways}}
& d-vldb & 1,306 & 3,224 & 1 & 15 & 1\\
& d-sigmod & 1,545 & 4,191 & 1 & 28 &1\\
& d-cikm & 2,367 & 4,388 & 1 & 11 &1 \\
& d-sigkdd & 1,529 & 3,158 & 1 & 42 &1 \\
& d-icdm & 1,651 & 2,883 & 1 & 17 &1\\
& d-sdm & 915 & 1,501 & 1 & 7 &1\\\hline
\multirow{5}{*}{\begin{sideways}DBLP-Y\end{sideways}}
&  d-05 & 39,357 & 79,114 & 1 & 2231 &2\\
&  d-06 & 44,982 & 94,274 & 1 & 2856 &2 \\
&  d-07 & 47,465 & 103,957 & 1 & 4603 &2 \\
&  d-08 & 47,350 & 107,643 & 1 & 9859 &3 \\
&  d-09 & 45,173 & 102,072 &  1& 9209 &2\\\hline
\multirow{5}{*}{\begin{sideways}IMDb\end{sideways}}
&  i-05 & 13,805 & 130,295 & 1 & $>$ 2 days&3\\
&  i-06 & 14,228 & 142,955 & 1 & $>$ 2 days&3\\
&  i-07 & 13,989 & 133,930 & 1&$>$ 2 days &2 \\
&  i-08 & 14,055 & 132,007 & 1 &$>$ 2 days&3 \\
&  i-09 & 14,372 & 128,926 & 1 &$>$ 2 days& 2\\\hline
\multirow{5}{*}{\begin{sideways}Oregon AS\end{sideways}}
& o-1 & 10,900 & 31,181 & 2 &$>$ 2 days& 1\\
& o-2 & 11,019 & 31,762 &2 & $>$ 2 days& 1\\
& o-3 & 11,113 & 31,435 & 2 &$>$ 2 days& 1\\
& o-4 & 11,260 & 31,304 & 2 &$>$ 2 days& 1\\
& o-5 & 11,461 & 32,731 &  2&$>$ 2 days &1\\\hline
\multirow{5}{*}{\begin{sideways}Query Log\end{sideways}}
& ql-1 & 138,976 & 1,102,606& 209 & $>$ 2 days & 14\\
& ql-2 & 108,420 & 876,517 & 119 & $>$ 2 days & 11\\
& ql-3 & 89,406 & 707,579 & 107 &$>$ 2 days& 9\\
& ql-4 & 75,838 & 582,703 & 68 &$>$ 2 days& 8\\
& ql-5 & 42,946 & 253,469 & 11 & $>$ 2 days&5\\\hline
\end{tabular}
\end{small}
}
\caption{Run times (in seconds, unless otherwise noted) of \simgraph, \fsm, and \eig on our real networks}
\label{tab:scalability}
\end{table}

\subsection{Applications}

\simgraph can be used in numerous graph mining applications.  Here we discuss three of them.

\textbf{\simgraph as a Measure of Node-Overlap.}  Given three graphs $G_A$, $G_B$, and $G_C$ of the same domain (e.g., co-authorship networks in \emph{SIGMOD}, \emph{VLDB} and  \emph{ICDE}), can we use \emph{only} their \simgraph's ``signature'' vectors to gauge the amount of node-overlap between them?  Our hypothesis is that if graph $G_A$ is more similar to graph $G_B$ than graph $G_C$, then $G_A$ will have more overlap in terms of nodes with $G_B$ than $G_C$.  To test this hypothesis, we ran \simgraph with Canberra Distance on our real networks.  Figure~\ref{fig:nodeoverlap}(a) depicts the scatterplot of \simgraph results on graphs within each comparable group (i.e., arXiv, DBLP-C, DBLP-Y, IMDb, Query Log, and Oregon AS graphs).  The $y$-axis is the normalized node overlap and is equal to $\frac{\mid V_{G_A} \cap V_{G_B} \mid}{\sqrt{\mid V_{G_A} \mid \times \mid V_{G_B} \mid}}$. As the figure shows the lower the \simgraph Canberra Distance, the higher the normalized node intersection.  This confirms our hypothesis that \simgraph can be used to gauge node-overlap between two graphs without node correspondence information. Figure~\ref{fig:nodeoverlap}(b) shows the same scatter plot, but computed using the EIG Canberra Distance approach. In this case, there is no correlation between node overlap and the distance. Due to its scalability issues, the FSM approach could not be computed on all the networks in Figure~\ref{fig:nodeoverlap}.

\begin{figure}[!h]
\centering
\begin{tabular}{c}
\hspace{-4mm}\includegraphics[scale=1.25]{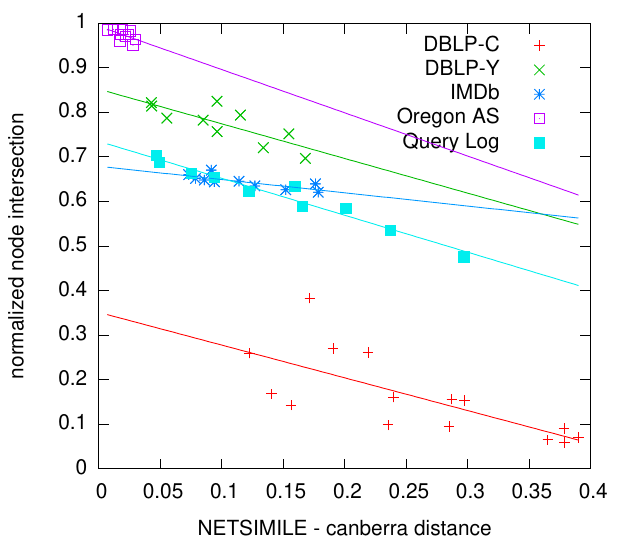}\\
(a)\\
\hspace{-4mm}\includegraphics[scale=1.25]{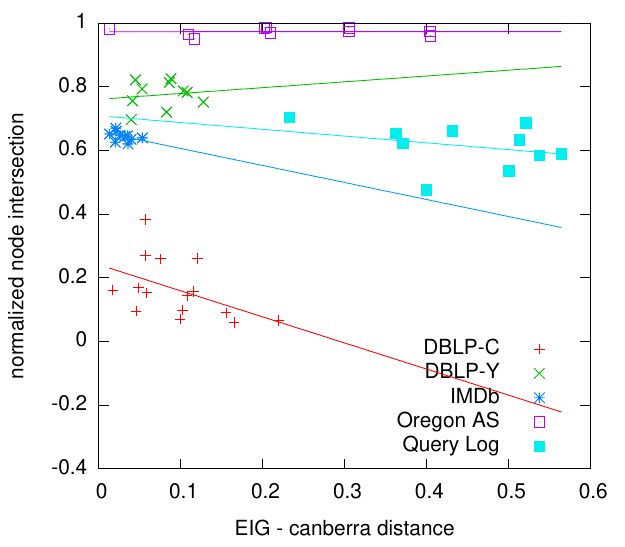}\\
(b)
\end{tabular}
\caption{(a) \simgraph Canberra Distance
  on DBLP, IMDb, Oregon and QueryLog. (b) EIG Canberra Distance on the
  same networks. \simgraph is an effective measure for node overlap without
  any node-correspondence information.  The lower the \simgraph Canberra
  Distance, the higher the normalized node intersection.  This correlation does not
  hold for EIG.  The points in both plots are along the fitted lines. For \simgraph (a), the 
 root mean square of residuals are $6.5E{-2}$ for DBLP-C,
  $2.6E{-2}$ for DBLP-Y,  $9.0E{-3}$ for IMDb,  $1.4E{-2}$ for Oregon AS, and
  $6.5E{-2}$ for Query Log. For EIG (b), the root mean square of residuals are $8.2E{-2}$ for DBLP-C,
  $4.2E{-2}$ for DBLP-Y,  $1.3E{-3}$ for IMDb, $1.2E{-2}$ for Oregon AS, and $6.7E{-2}$ for Query Log.}
\label{fig:nodeoverlap}
\end{figure}

\textbf{\simgraph as a Network Labeler.} Given a new (\emph{never before seen}) graph, can we use the Canberra Distance between its \simgraph's ``signature'' vector to known graphs' \simgraph ``signature'' vectors to accurately predict its label?  To answer this question, we setup and ran the following 4-step experiment.  In step 1, we created a set of \emph{test} graphs by generating 50 synthetic graphs of types Erd\"{o}s-R\'{e}nyi, Watts-Strogatz, Barab\'{a}si, and Forest Fire.  In step 2, for each test graph, we compared its \simgraph score using the normalized Canberra Distance with existing graphs (as reported in Table~\ref{tab:networks}).  In step 3, we assigned to the test graph the label of its most similar graph.  In step 4, we computed the accuracy of our predictions.  

The predictive accuracy of \simgraph was 100\% -- i.e., \simgraph was able to label all 50 test graphs accurately.  For each of the 50 test graphs, we inspected the \simgraph normalized Canberra Distance between the most similar graph (whose label we chose) and the second most similar graph (whose label we did not choose).  Let's call the former $dist_1$ and the latter $dist_2$.  The minimum difference between $dist_1$ and $dist_2$ across the 50 test graphs was 0.001.  The maximum was 0.428.  The mean difference was 0.143; and the standard deviation was 0.112.  Thus, the answer to the aforementioned question of whether \simgraph can be used effectively as a network labeler is yes.

We ran the same experiments using EIG with Canberra Distance on the same networks. The predictive accuracy of EIG was 72\%, i.e., 14 graphs were uncorrectly labeled. Figure~\ref{fig:ranklabel} shows the distribution of the ranking for the correct labels of graphs.  There are two cases, in which the correct (i.e. true) labels for the graphs are ranked 11th by EIG. Due to scalability issues, FSM could not be performed on all the networks in this experiment.

\begin{figure}[!h]
\centering
\includegraphics[scale=1.1]{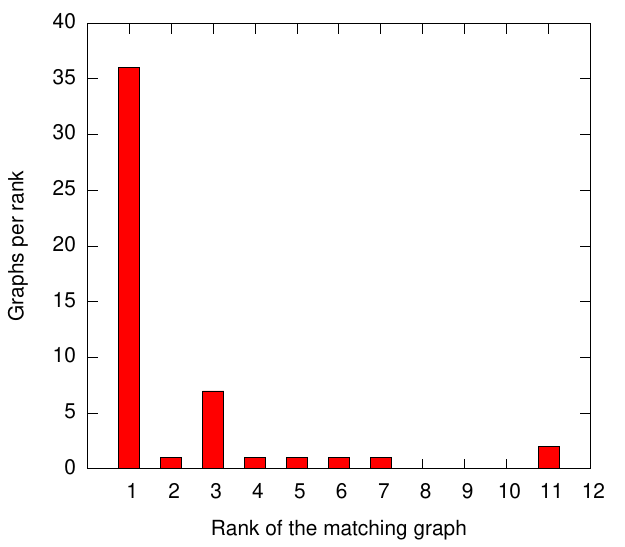}\\
\caption{Distribution of the rankings of the correct labels for the synthetic graphs by EIG with Canberra Distance.  EIG's accuracy is only 72\%.  There are two graphs whose correct  labels are ranked 11.}
\label{fig:ranklabel}
\end{figure}

\textbf{\simgraph as a Discontinuity Detector.} Given a time-series of graphs $\{G_1, G_2, G_3, ..., G_t\}$, can \simgraph detect any discontinuity (i.e. temporal outliers) present in the data?  To answer this question, we utilize \simgraph Canberra Distance to compute the difference between graphs in a time series.  For this experiment, we used data coming from two different messaging services, Yahoo! IM and Twitter.

The first dataset contains 28 days of Yahoo! IM communications (\url{http://sandbox.yahoo.com}), starting Tuesday, April 1, 2008. Each graph is a collection of instant messages (IMs) per day, with nodes representing IM users and links denoting communication events.  The graphs are of varying sizes: number of nodes from 29K to 100K and number of edges from 80K to 280K.  We computed the \simgraph normalized Canberra Distance between Day 0 (April 1, 2008) and the other 27 days. Figure~\ref{fig:discontinuity}(a) shows our results, with the $x$-axis representing days and the $y$-axis representing \simgraph (with normalized Canberra Distance) between Day 0 and the other 27 days.  Figure~\ref{fig:discontinuity}(b) shows \simgraph (with normalized Canberra Distance) between Day 8 (April 9, 2008) and all the other days.  As the figures illustrate, \simgraph detects the weekday vs. weekend discontinuities.  It also detects a discontinuity on Wednesday April 9, 2008. The following event explains this discontinuity. Flickr announced that it will add video to its popular photo-sharing community\footnote{\url{http://yhoo.client.shareholder.com/releasedetail.cfm?releaseid=303857}} on April 8, 2008; but its news spread on April 9, 2008.\footnote{\url{http://searchengineland.com/}\\ \url{flickr-launches-video-its-not-a-youtube-clone-13727}} This event is reflected in the graph for April 9, 2008, where the number of connected components decreases by $4\times$ as the news about Flickr spreads among the IM users. 

\begin{figure}[!h]
\centering
\begin{tabular}{c}
\includegraphics[scale=0.75]{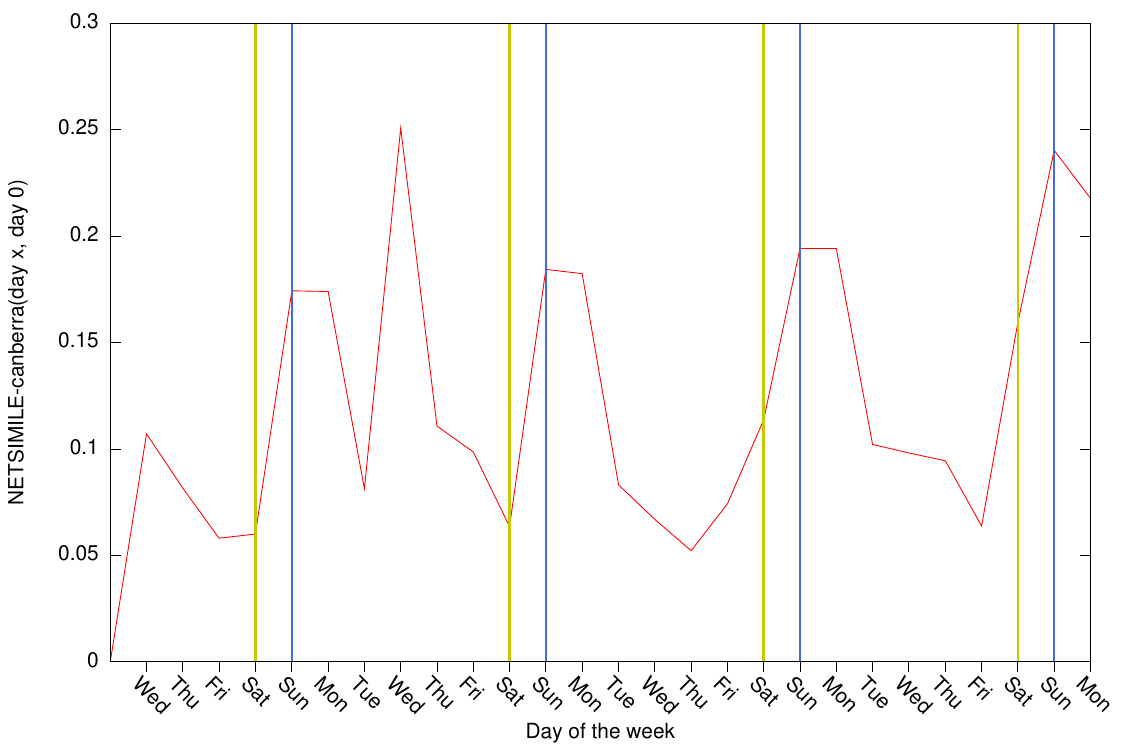}\\
(a) NetSimile between each day and day 0 in Yahoo! IM\\\ \\
\includegraphics[scale=0.75]{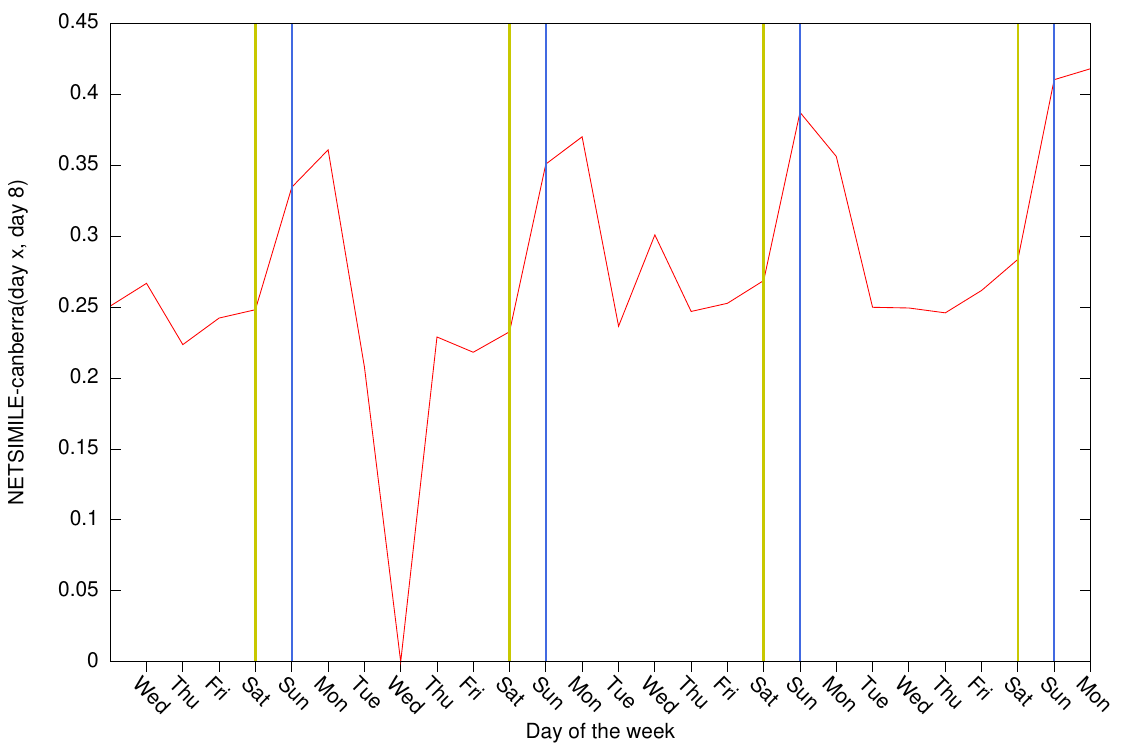}\\
(b) NetSimile between each day and day 8 in Yahoo! IM
\end{tabular}
\caption{\simgraph detects discontinuities in time-evolving graphs.  
(a) Distance of day 1, day 2, $\cdots$, day 27 IM graphs from day 0 (Tuesday April 1, 2008) IM graph. Weekdays are distinguished from weekends (yellow line=
Saturday, blue line = Sunday).  The peak on the 2nd Wednesday (April, 9, 2008) corresponds to a big Flicker announcement and a Microsoft offer to buy Yahoo!. (b) Distance of the other days from day 8 (April 9, 2008) IM graph. All the other days are distant from April 9, 2008.}
\label{fig:discontinuity}
\end{figure}

The second dataset contains 30 days of Twitter\footnote{\url{http://www.twitter.com}} @replies (i.e., messages that begin with a direct mention to ``@user''), starting Monday, June 1, 2009. The sizes of these graphs vary less than in the Yahoo! data: number of nodes range from 4K to 8K, and number of edges range from 2K to 5K. Similar to the Yahoo! experiments (detailed above), we computed the \simgraph normalized Canberra Distance between Day 0 and the other 30 days.  Figure~\ref{fig:discontinuity2}(a) shows our results, with the $x$-axis representing days and the $y$-axis representing \simgraph (with normalized Canberra Distance) between Day 0 and the other 30 days.  Figure~\ref{fig:discontinuity2}(b) shows \simgraph (with normalized Canberra Distance) between Day 6 (June 7, 2009) and all the other days. Figure~\ref{fig:discontinuity2} shows no particular periodicity.  This is not surprising since  Twitter @replies are semi-private conversations between two people and their common followers  (unlike instant messages that are private).  The @replies tweets have less of the ``news amplification'' effect as regular tweets \cite{sousa}.  However, NetSimile is still able to spot a significant discontinuity on day 6 (June 7, 2009). On that day, the largest cross-country election in the history took place (namely, the European parliamentary elections), affecting almost 500M people; parliamentary elections also took place in Lebanon; high school students graduated in the U.S.; and  Roger Federer became the sixth man in tennis history to complete a career Grand Slam  (by winning the French Open on that day) and tied Pete Sampras' Grand Slam record. 

\begin{figure}[!h]
\centering
\begin{tabular}{c}
\includegraphics[scale=0.75]{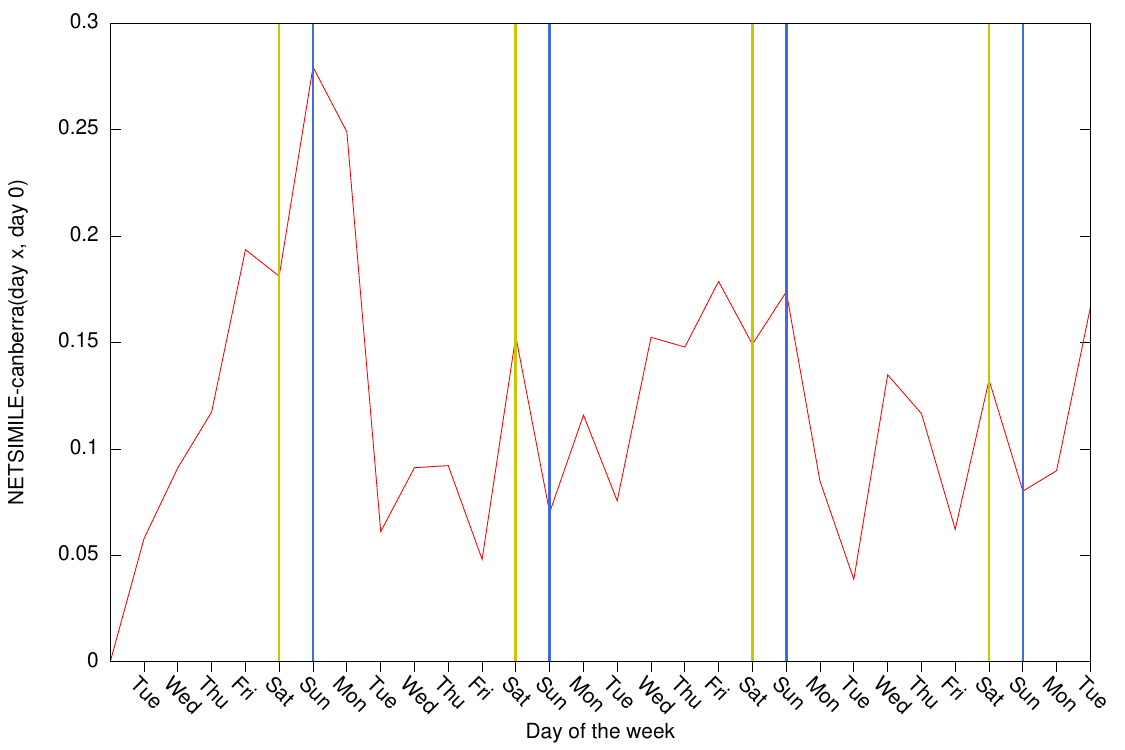}\\
(a) NetSimile between each day \& day 0 in Twitter @replies\\\ \\
\includegraphics[scale=0.75]{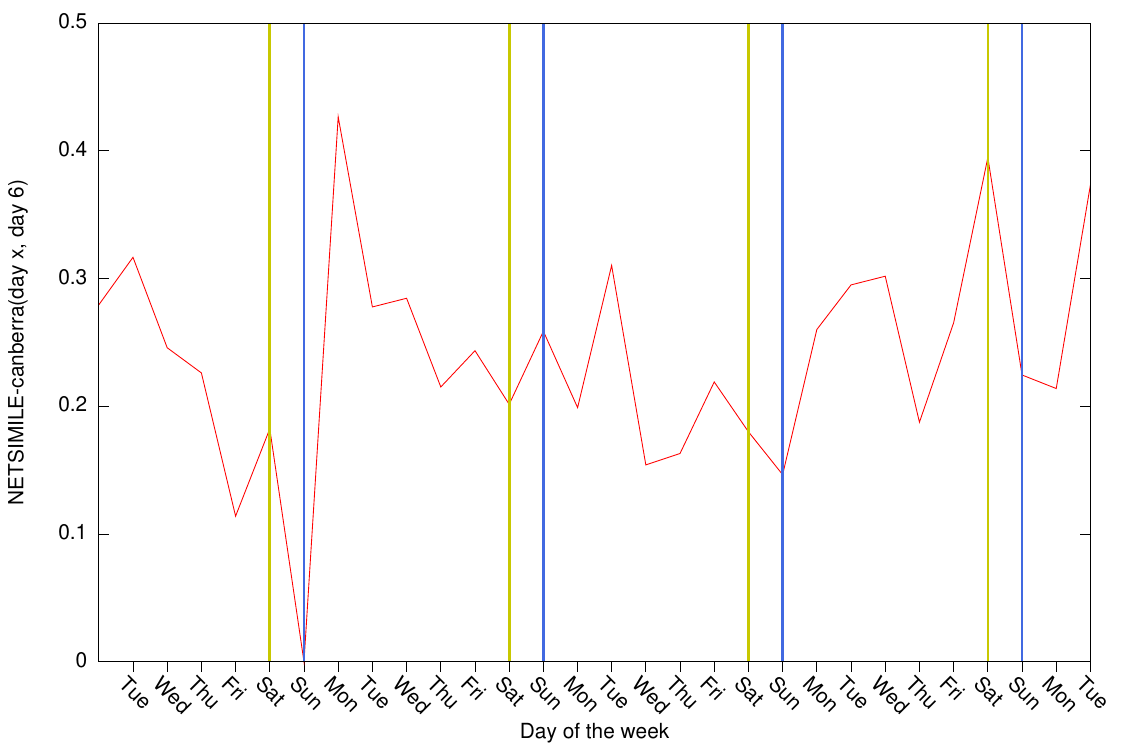}\\
(b) NetSimile between each day \& day 6 in Twitter @replies\\\ \\
\end{tabular}
\caption{\simgraph detects discontinuities in time-evolving graphs.   (a) Distance of day 1, day 2, $\cdots$, day 30 Twitter graphs from day 0 (Monday June 1 2009) graph. Weekdays are distinguished from weekends (yellow line= Saturday, blue line = Sunday). (b) Distance of the other days from day 6 (June 7, 2009) Twitter graph. All the other days are distant from June 7, 2009. Twitter @replies are semi-private conversations; thus, we do not expect to see periodicity  (like in IM conversations). On June 7, 2009, elections were held for the European Parliament and for the Lebanese Parliament, Roger Federer wins the French Open and makes tennis history, and high-schoolers graduate in the U.S.}
\label{fig:discontinuity2}
\end{figure}

\section{Related Work}
\label{sec:background}

Assessing the similarity between two ``objects'' comes up in numerous settings. Thus, the literature is rich in similarity measures for various domains: distributions or multi-dimensional points \cite{survey_sim_measures}, datacubes \cite{BRV11}, and graphs, such as social (\cite{faust, macindoe}), information \cite{web_graph_sim}, and biological networks \cite{codense}. Here, we focus on graph similarity when the node correspondence is unknown. The proposed methods can be divided into three main classes.

\textbf{(1) Graph isomorphism.} The similarity of the graphs depends on whether the graphs are isomorphic to each other  \cite{graph_isomorphism}; or one graph is a subgraph of the other (\cite{subgraph1, subgraph2});  or they have common subgraphs \cite{max_common_subgraph}. The exact algorithms of these problems are exponential, rendering them inapplicable to the large graphs on which our research focuses. Graph edit distance is a generalization of the graph isomorphism problem which consists of finding the minimum number of operations (insertions, deletions, renaming of nodes, reversion of edges) that is required to convert one graph to another. Different cost functions (\cite{edit_dist1, edit_dist2}) have been proposed in the literature. It is noteworthy to mention that graph isomorphism addresses mainly the graph matching problem (roughly, given two graphs, find the correspondence between their nodes).  It does not directly address the graph similarity problem, which is the topic of this paper.

\textbf{(2) Iterative methods.} Two well known representatives of this category are  SimRank \cite{SimRank02} and similarity flooding \cite{Melnik02}.  The former computes all the pairwise similarities between the nodes. The latter attempts to find the correspondences between the nodes of the graphs in order to assess their similarity. Zager et al.~\cite{ZV08} proposed a method that computes the similarity of the graphs by coupling the pairwise similarities between the nodes with the similarities between the corresponding edges.

\textbf{(3) Feature Extraction.}  These methods, which are based on comparing specific graph features, are popular due to their scalability.  The challenge here is the appropriate selection of features/patterns, since some features, such as frequent subgraphs (\cite{Berlingerio09, kumar_mining, codense, Karypis, gspan}), can be computationally expensive to extract.  Macindoe et al.~\cite{macindoe} and Faust \cite{faust} focus on comparing social networks by extracting socially relevant features. GraphGrep \cite{graphGrep} extracts paths by doing random walks on the graphs. GString \cite{gstring}  converts the graphs into sequences and extracts features from the latter.  G-Hash \cite{ghash} applies a wavelet matching kernel to indirectly extract information about the neighborhoods of the nodes. Papadimitriou et al.~\cite{web_graph_sim}  compare web graphs, where the correspondence of the nodes is known.  Henderson et al.~\cite{Refex11} propose a method for mining recursive structural features.  \simgraph is easily extensible to incorporate these features. Lastly, Li et al.~\cite{GengMBZ11} propose a classification approach of attributed graphs, which is based on global feature extraction. The weakness of the proposed method is that some features (e.g., eccentricity and shortest paths) are computationally expensive, and, thus, it is not scalable on large graphs. Moreover, the method is domain-specific and focuses in databases of graphs, such as chemical compounds, while our work aims at comparing graphs of different domains.

In this work, we focus on the \emph{local} feature extraction approach.  We aggregate a number of carefully chosen, interpretable, intuitive, and computationally inexpensive local features that capture the nuances in the structural information, and then employ various techniques (similarity measures \cite{survey_sim_measures}, hierarchical clustering, and hypothesis testing) in order to find the pairwise similarity scores of the given (possibly, cross-domain) networks.  

\section{Conclusions}
\label{sec:concl}
We introduced \simgraph, a novel, effective, size-independent, and scalable method for comparing large networks. \simgraph has three components: (1) feature extraction, (2) feature aggregation, and (3) comparison.  The heart of  our contribution is in components (1) and (2), where we discovered that  moments of distributions of structural features computed on the nodes and their egonets provide an excellent ``signature'' vector  for a graph.  These ``signature'' vectors can be used to effectively and quickly assess the similarity of two or more graphs. 

Our broader contributions are:
\begin{itemize*}
\item \emph{Novelty}: \simgraph avoids the (expensive) node correspondence problem, as well as adjusts for graph size.
\item \emph{Effectiveness}: \simgraph gives results that agree with intuition and the ground-truth.
\item \emph{Scalability}: \simgraph generates its ``signature'' vectors in time linear on the input size (i.e., number of edges of the input graphs).
\item \emph{Applicability}: \simgraph's ``signature'' vectors are useful in numerous graph mining tasks.  In addition, \simgraph is easily extensible to include features and aggregators besides the ones presented. 
\end{itemize*}

\bibliographystyle{IEEEtran}

\end{document}